\journalname{JOTA}
\begin{document}

\title{An Optimal Transport Perspective on \\Unpaired Image Super-Resolution}

\author{Milena Gazdieva$^{\star\dagger}$\thanks{$^\star$ Skolkovo Institute of Science and Technology,  
Moscow, Russia \\ $^\times$ Indian Space Research Organization, India\\ $^\dagger$ Artificial Intelligence Research Institute, Moscow, Russia \\ $^\ddagger$ University of Oxford, Oxford, UK \\ $^\circ$ AI Foundation and Algorithm Lab, Moscow, Russia} \and Petr Mokrov$^\star$ \and Litu Rout$^\times$ \and Alexander Korotin$^{\star \, \dagger}$ \and Andrey Kravchenko$^\ddagger$ \and Alexander Filippov$^\circ$ \and Evgeny Burnaev$^{\star\, \dagger}$
}
\authorrunning{Milena Gazdieva et al.}

\institute{Milena Gazdieva, Corresponding author \at
             Skolkovo Institute of Science and Technology \\
             Artificial Intelligence Research Institute\\
              Moscow, Russia\\
              milena.gazdieva@skoltech.ru
}

\date{Received: date / Accepted: date}

\maketitle

\begin{abstract}
\looseness=-1
\vspace{-1mm}
Real-world image super-resolution (SR) tasks often do not have paired datasets, which limits the application of supervised techniques. As a result, the tasks are usually approached by \textit{unpaired} techniques based on Generative Adversarial Networks (GANs), which yield complex training losses with several regularization terms, e.g., content or identity losses. While GANs usually provide good practical performance, they are used heuristically, i.e., theoretical understanding of their behaviour is yet rather limited. We theoretically investigate optimization problems which arise in such models and find two surprising observations. First, the learned SR map is always an \textit{optimal transport} (OT) map. Second, we theoretically prove and empirically show that the learned map is \textit{biased}, i.e., it does not actually transform the distribution of low-resolution images to high-resolution ones. Inspired by these findings, we investigate recent advances in neural OT field to resolve the \textit{bias} issue. We establish an intriguing connection between regularized GANs and neural OT approaches. We show that unlike the existing GAN-based alternatives, these algorithms aim to learn an \textit{unbiased} OT map. We empirically demonstrate our findings via a series of synthetic and real-world unpaired SR experiments. Our source code is publicly available at \url{https://github.com/milenagazdieva/OT-Super-Resolution}.
\end{abstract}

\vspace{2mm}
\noindent Communicated by Martin Takac.

\keywords{Optimal transport, generative modeling}
 
\section{Introduction}
\vspace{-2mm}
\begin{wrapfigure}{r}{0.45\textwidth}
  \vspace{-5mm}\begin{center}
    \includegraphics[width=\linewidth]{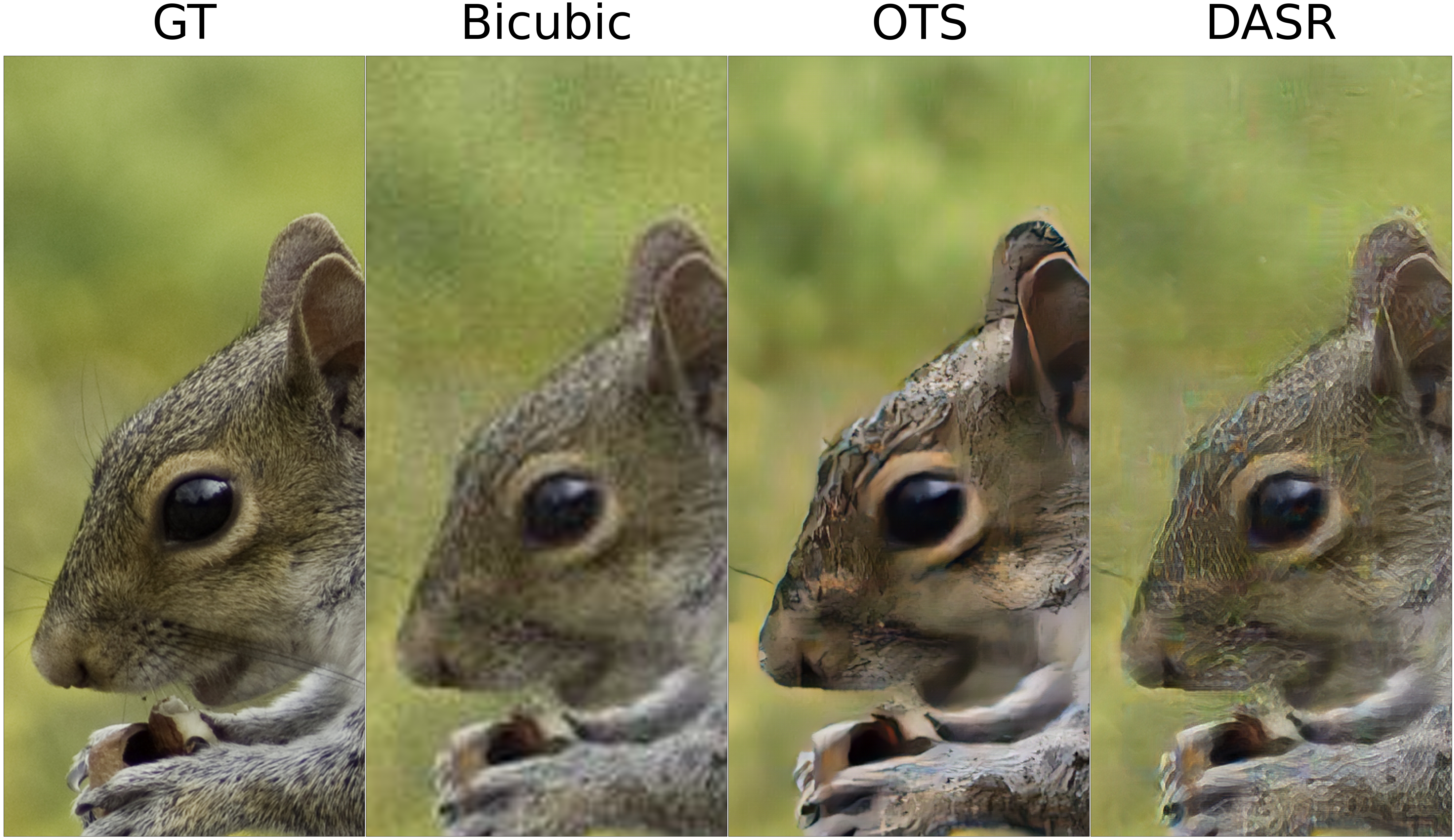}
  \end{center}
  \vspace{-5mm}
  \caption{\centering Super-resolution of a squirrel using Bicubic upsample, OTS and DASR \cite{Wei_2021_CVPR} methods (4$\times$4 upsample, 370$\times$800 crops).}
  \vspace{-5mm}
\end{wrapfigure}
The problem of image super-resolution (SR) is to reconstruct a high-resolution (HR) image from its low-resolution (LR) counterpart. In many modern deep learning approaches, SR networks are trained in a supervised manner by using synthetic datasets containing LR-HR \textit{pairs} \cite[\wasyparagraph 4.1]{lim2017enhanced};
\cite[\wasyparagraph 4.1]{zhang2018rcan}. For example, it is common to create LR images from HR with a simple downscaling, e.g., bicubic \cite[\wasyparagraph 3.2]{ledig2017photo}.
However, such an artificial setup barely represents the practical setting, in which  the degradation is more sophisticated and unknown \cite{maeda2020unpaired}. This obstacle necessitates developing methods capable of learning SR maps from \textit{unpaired} data without considering prescribed degradations. Currently, the problem of unpaired image SR is typically solved by adversarial methods \cite{fritsche2019frequency,Wei_2021_CVPR,zhou2020guided,liu2023unpaired}, making them the primary focus of our research. Recently, diffusion-based approaches to image SR have emerged \cite{saharia2022image,yue2023resshift,pmlr-v202-liu23ai}, but they are targeted at the \textit{paired} image SR setup. To the best of our knowledge, all existing diffusion-based SR methods are \textit{paired}, i.e., they require paired samples when learning the diffusion.

\noindent\textbf{Contributions.} We study the unpaired image SR task and its solutions based on Generative Adversarial Networks \cite[GANs]{goodfellow2014generative} and analyse them from the Optimal Transport \cite[OT]{villani2008optimal} perspective.

\begin{enumerate}[leftmargin=*]
\looseness=-1
\item \textbf{Theory I.} We investigate the GAN optimization objectives regularized with content losses, which are common in unpaired image SR methods (\wasyparagraph\ref{sec-biased-ot}). We prove that the solution to such objectives is always an optimal transport map which is, in general, biased.

\item \textbf{Theory II.} We explain the ideas that stand behind recent algorithms from the field of neural OT \cite{korotin2023neural,fan2023neural} which aim to recover the true (unbiased) OT map. To do this, we show that their algorithms' optimization objective can be viewed as a certain particular case of GAN-based objectives regularized with content losses (\wasyparagraph\ref{sec-unbiased-ots}). We also establish connections between these algorithms and regularized GANs that use integral probability metrics (IPMs) as a loss (\wasyparagraph\ref{sec-gans-vs-ot}).

\item\textbf{Practice.}   We empirically show that oppositely to neural OT methods GANs' maps are \textit{biased} (\wasyparagraph\ref{sec-bias-experiments}), i.e., they do not transform the LR image distribution to the true HR image distribution. We demonstrate the findings on the synthetic (\wasyparagraph \ref{sec-bias-experiments}) and real-world (\wasyparagraph\ref{sec-aim-bias}) super-resolution task.
\end{enumerate}
\vspace{-1.5mm}
We emphasize the importance of the revealed bias issue of GANs which is critical, e.g., in medical applications. In MRI super-resolution, the biased models may generate inexistent details, potentially leading to misdiagnosis \cite{bissoto2021gan}.

\noindent\textbf{Notation.} We use $\mathcal{X}=\mathbb{R}^{D_x},\mathcal{Y}=\mathbb{R}^{D_y}$ to denote data spaces and $\mathcal{P}(\mathcal{X}), \mathcal{P}(\mathcal{Y})$ to denote the respective sets of probability distributions on them. We denote by $\Pi(\mathbb{P},\mathbb{Q})$ the set of probability distributions on $\mathcal{X} \times \mathcal{Y}$ with marginals $\mathbb{P}$ and $\mathbb{Q}$.
For a measurable map $T:\mathcal{X}\rightarrow\mathcal{Y}$, we denote the associated push-forward operator by $T_{\#}$. The expression $\|\cdot\|$ denotes the usual Euclidean norm if not stated otherwise. We denote the space of $\mathbb{Q}$-integrable functions on $\mathcal{Y}$ by $L^{1}(\mathbb{Q})$. Let $\mathbb{P}$ and $\mathbb{Q}$ be two distributions of LR and HR images, respectively, on spaces $\mathcal{X}$ and $\mathcal{Y}$, respectively.

\vspace{-2.5mm}
\section{Unpaired Image Super-Resolution Task}
\vspace{-2mm}

In this section, we formalize the \textit{unpaired} image super-resolution task that we consider (Figure \ref{fig:sr-setup}).

\begin{wrapfigure}{r}{0.55\textwidth}
  \vspace{-10mm}
  \begin{center}
    \includegraphics[width=\linewidth]{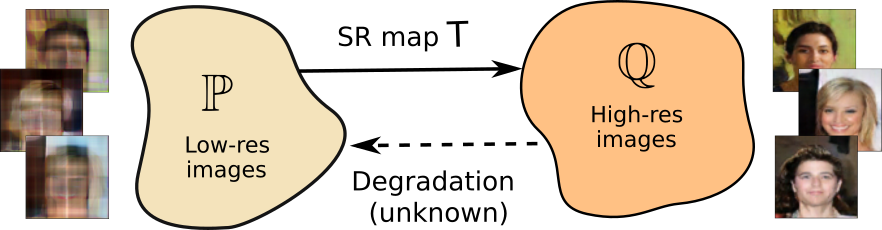}
  \end{center}
  \vspace{-3.5mm}
  \caption{\centering The task of super-resolution we consider. }
  \label{fig:sr-setup}
  \vspace{-9.0mm}
\end{wrapfigure}
  The learner has access to unpaired random samples from $\mathbb{P}$ and $\mathbb{Q}$. The task is to fit a map ${T:\mathcal{X}\rightarrow\mathcal{Y}}$ satisfying ${T_{\#}\mathbb{P}=\mathbb{Q}}$ which \textit{inverts} the degradation.

We highlight that the image SR task is theoretically ill-posed for two reasons.
\vspace{-1.5mm}
\begin{enumerate}[leftmargin=*]
    \item \textbf{Non-existence.} The degradation filter may be \textit{non-injective} and, consequently, \textit{non-invertible}. This is a theoretical obstacle to learn one-to-one SR maps $T$. 
    \item \textbf{Ambiguity.} There might exist \textit{multiple} maps satisfying ${T_{\#}\mathbb{P}=\mathbb{Q}}$ but only one inverting the degradation. With no prior knowledge about the correspondence between $\mathbb{P}$,  $\mathbb{Q}$, it is unclear how to pick this particular map.
\end{enumerate}

\textbf{The first issue} is usually not taken into account in practice. Most existing paired and unpaired SR methods learn one-to-one SR maps $T$, see \cite{ledig2017photo,lai2017deep,Wei_2021_CVPR}.

\textbf{The second issue} is typically softened by regularizing the model with the content loss.
In the real-world, it is reasonable to assume that HR and the corresponding LR images are close. Thus, the fitted SR map $T$ is expected to only \textit{slightly} change the input image. Formally, one may require the learned map $T$ to have the small value of
\vspace{-2mm}
\begin{equation}
    \mathcal{R}_{c}(T)\stackrel{def}{=}\int_{\mathcal{Y}}c\big(x,T(x)\big)d\mathbb{P}(x),
    \label{identity-c-loss}
\end{equation}

\vspace{-2.5mm}
where $c:\mathcal{X}\times\mathcal{Y}\rightarrow\mathbb{R}_{+}$ is a function  estimating   how different the inputs are. The most popular example is the $\ell^{1}$ \textit{identity} loss, i.e, formulation \eqref{identity-c-loss} for $\mathcal{X}=\mathcal{Y}=\mathbb{R}^{D}$ and $c(x,y)=\|x-y\|_{1}$.

More broadly, losses $\mathcal{R}_{c}(T)$
are typically called \textit{content} losses and incorporated into training objectives of methods for SR  \cite[\wasyparagraph 3.4]{lugmayr2019unsupervised}, \cite[\wasyparagraph 3]{kim2020unsupervised} and other unpaired tasks beside SR \cite[\wasyparagraph 4]{taigman2016unsupervised}, \cite[\wasyparagraph 5.2]{zhu2017unpaired} as regularizers. They stimulate the learned map $T$ to minimally change the image content.

A common approach to solve the unpaired SR via GANs is to define a loss function $\mathcal{D}:\mathcal{P}(\mathcal{Y})\times\mathcal{P}(\mathcal{Y})\rightarrow\mathbb{R}_{+}$ and train a generative neural network $T$ via minimizing
\begin{equation}\inf_{T:\mathcal{X}\mapsto\mathcal{Y}}\big[\mathcal{D}(T_{\#}\mathbb{P},\mathbb{Q})+\lambda \mathcal{R}_{c}(T)\big]. 
\label{base-gan-c}
\end{equation}

\looseness=-1
The term $\mathcal{D}(T_{\#}\mathbb{P},\mathbb{Q})$ ensures that the generated distribution $T_{\#}\mathbb{P}$ of SR images is close to the true HR distribution $\mathbb{Q}$. For convenience, we assume that $\mathcal{D}(\mathbb{Q}, \mathbb{Q})=0$ for all $\mathbb{Q}\in\mathcal{P(Y)}$. Two most popular examples of $\mathcal{D}$ are the Jensen--Shannon divergence \cite{goodfellow2014generative}, i.e., the vanilla GAN and the Wasserstein-1 loss \cite{arjovsky2017towards}. 

In unpaired SR methods, the optimization objectives are typically more complex than \eqref{base-gan-c}. In addition to the content or identity loss \eqref{identity-c-loss}, several other regularizations are usually introduced. Existing approaches to unpaired image SR mainly solve the problem in two steps. One group of approaches learn the degradation operation at the first step and then train a super-resolution model in a supervised manner using generated pseudo-pairs, see \cite{bulatyang2018learn,fritsche2019frequency}. Another group of approaches \cite{yuan2018unsupervised,maeda2020unpaired} firstly learn a mapping from real-world LR images to ``clean`` LR images, i.e., HR images, downscaled using predetermined (e.g., bicubic) operation, and then a mapping from ``clean`` LR to HR images. Most methods are based on CycleGAN \cite{zhu2017unpaired}, initially designed for the domain transfer task, and utilize cycle-consistency loss. Methods are also usually endowed with several other losses, e.g. content \cite[\wasyparagraph 3]{kim2020unsupervised}, identity \cite[\wasyparagraph 3.2]{wang2021unsupervised} or perceptual \cite[\wasyparagraph 3.4]{lugmayr2019unsupervised}. However, we emphasize that all methods have unpaired learning step which corresponds to the optimization objective \eqref{base-gan-c}. In Appendix \ref{gan-objectives-example}, we show that the \underline{learning objectives} of popular SR methods can be represented as \eqref{base-gan-c}.

\vspace{-2.5mm}
\section{Background on Optimal Transport}
\vspace{-2mm}

In this section, we give the key concepts of the OT theory \cite{villani2008optimal} that we use
in our paper.

\noindent\textbf{Primal form}. For two distributions ${\mathbb{P}\in\mathcal{P}(\mathcal{X})}$ and ${\mathbb{Q}\in\mathcal{P}(\mathcal{Y})}$ and a transport cost $c:\mathcal{X}\times\mathcal{Y}\rightarrow\mathbb{R}$, Monge's primal formulation of the \textit{optimal transport cost} is as follows: 
\begin{equation}
\text{Cost}(\mathbb{P},\mathbb{Q})\stackrel{\text{def}}{=}\inf_{T_{\#}\mathbb{P}=\mathbb{Q}}\ \int_{\mathcal{X}} c\big(x,T(x)\big)d\mathbb{P}(x),
\label{ot-primal-form-monge}
\end{equation}
where the minimum is taken over the  measurable functions (transport maps) $T:\mathcal{X}\rightarrow\mathcal{Y}$ 
that map $\mathbb{P}$ to $\mathbb{Q}$, see Figure \ref{fig:ot-map-def}. 
The optimal $T^{*}$ is called the \textit{optimal transport map}.

Note that \eqref{ot-primal-form-monge} is not symmetric, and this formulation does not allow mass splitting, i.e., for some
$\mathbb{P},\mathbb{Q}$ there may be no map $T$ that satisfies $T_{\#}\mathbb{P}=\mathbb{Q}$. Thus, \cite{kantorovitch1958translocation} proposed the relaxation:
\begin{equation}\text{Cost}(\mathbb{P},\mathbb{Q})\stackrel{\text{def}}{=}\inf_{\pi\in\Pi(\mathbb{P},\mathbb{Q})}\int_{\mathcal{X}\times \mathcal{Y}}c(x,y)d\pi(x,y),
\label{ot-primal-form}
\end{equation}

\vspace{-2mm}
where the minimum is taken over the transport plans $\pi$, i.e., the measures on $\mathcal{X}\times\mathcal{Y}$ whose marginals are $\mathbb{P}$ and $\mathbb{Q}$ (Figure \ref{fig:ot-plan-def}). The optimal $\pi^{*}\in\Pi(\mathbb{P},\mathbb{Q})$ is called the \textit{optimal transport plan}.

With mild assumptions on the transport cost $c(x,y)$ and distributions $\mathbb{P}$, $\mathbb{Q}$, the minimizer $\pi^{*}$ of \eqref{ot-primal-form} always exists \cite[Theorem 4.1]{villani2008optimal} but might not be unique. If $\pi^{*}$ is of the form $[\text{id} , T^{*}]_{\#} \mathbb{P}\in\Pi(\mathbb{P},\mathbb{Q})$ for some $T^*$, then
$T^{*}$ is an optimal transport map that minimizes \eqref{ot-primal-form-monge}.

\begin{figure}[!h]
\centering
\vspace{-1mm}
\begin{subfigure}[b]{0.48\linewidth}
  \begin{center}
    \includegraphics[width=0.85\linewidth]{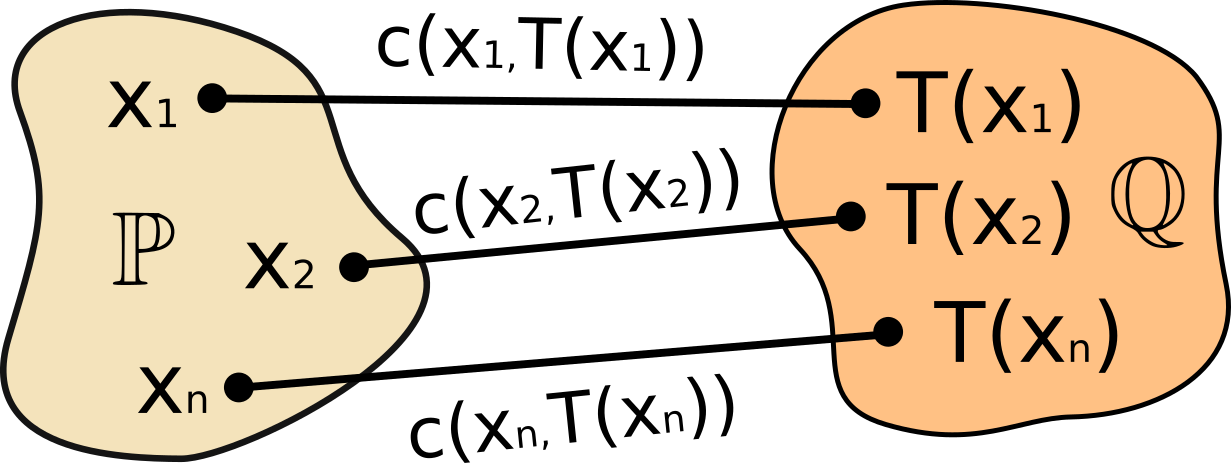}
  \end{center}
  \vspace{-1.7mm}
  \caption{\centering Monge's formulation of OT.}
  \label{fig:ot-map-def}
\end{subfigure}
\hspace{2mm}
\begin{subfigure}[b]{0.48\linewidth}
  \begin{center}
    \includegraphics[width=0.85\linewidth]{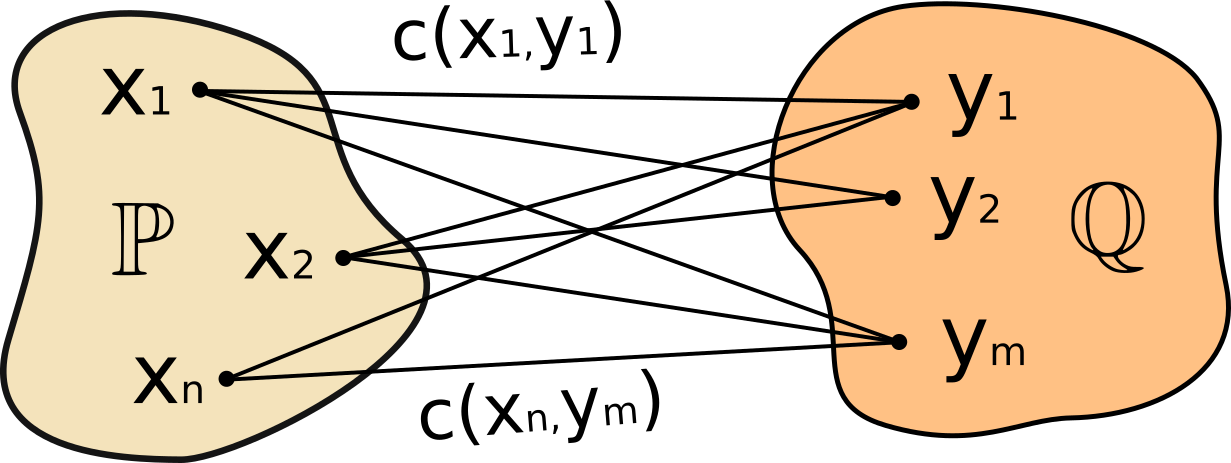}
  \end{center}
  \vspace{-1.7mm}
  \caption{\centering Kantorovich's formulation of OT.}
  \label{fig:ot-plan-def}
\end{subfigure}
\caption{Monge's and Kantorovich's formulations of Optimal Transport.}
\end{figure}

\vspace{-0mm}
\noindent\textbf{Dual form}. The dual form \cite{villani2003topics} of OT cost \eqref{ot-primal-form} is as follows:
\vspace{-1.7mm}
\begin{eqnarray}\text{Cost}(\mathbb{P}, \mathbb{Q})=
\sup_{f}\bigg[\int_{\mathcal{X}} f^{c}(x)d\mathbb{P}(x)+\int_{\mathcal{Y}} f(y)d\mathbb{Q}(y)\bigg]; 
\label{ot-dual-form-c}
\end{eqnarray}
\vspace{-3.8mm}

here $\sup$ is taken over all $f\!\in\! \mathcal{L}^{1}(\mathbb{Q})$, and ${f^{c}(x)\!=\!\inf\limits_{y\in\mathcal{Y}}\!\big[c(x,y)\!-\!f(y)\!\big]}$ is the $c$-transform of $f$.

\noindent\textbf{Optimal Transport in Generative Models.} The majority of existing OT-based generative models employ OT cost as the loss function to update the generative network, e.g., see \cite{arjovsky2017wasserstein}. These methods are out of scope of the present paper,  since they do not compute OT maps. Existing methods to compute the OT map approach the primal \eqref{ot-primal-form-monge}, \eqref{ot-primal-form} or dual form \eqref{ot-dual-form-c}. Primal-form methods \cite{lu2020large,xie2019scalable,bousquet2017optimal,balaji2020robust} optimize complex GAN objectives such as \eqref{base-gan-c} and provide biased solutions (\wasyparagraph\ref{sec-biased-ot}, \wasyparagraph\ref{sec-bias-experiments}). For a comprehensive overview of dual-form methods, we refer to \cite{korotin2021neural}. The authors conduct an evaluation of OT methods for the quadratic cost $c(x,y)=\|x-y\|^{2}$. According to them, the best performing method is $\lfloor \text{MM:R}\rceil$. 
Extensions of $\lfloor\text{MM:R}\rceil$ appear in \cite{rout2022generative,fan2023neural}.

\vspace{-2.5mm}
\section{Biased OT in GANs}
\vspace{-2mm}
\label{sec-biased-ot}
In this section, we establish connections between GAN methods regularized by content losses \eqref{identity-c-loss} and OT. Such GANs are popular in a variety of tasks beside SR, e.g., style transfer \cite{huang2018multimodal}. The theoretical analysis in this section holds for these tasks as well. However, since we empirically demonstrate the findings on the SR problem, we keep the corresponding notation in \wasyparagraph\ref{sec-biased-ot}. 

For a theoretical analysis, we stick to the basic formulation regularized with generic content loss \eqref{base-gan-c}. It represents the simplest and straightforward SR setup. We prove the following lemma, which connects the solution $T^{\lambda}$ of \eqref{base-gan-c} and OT maps.

\begin{wrapfigure}{r}{0.45\textwidth}
  \vspace{-5mm}
  \begin{center}
    \includegraphics[width=0.85\linewidth]{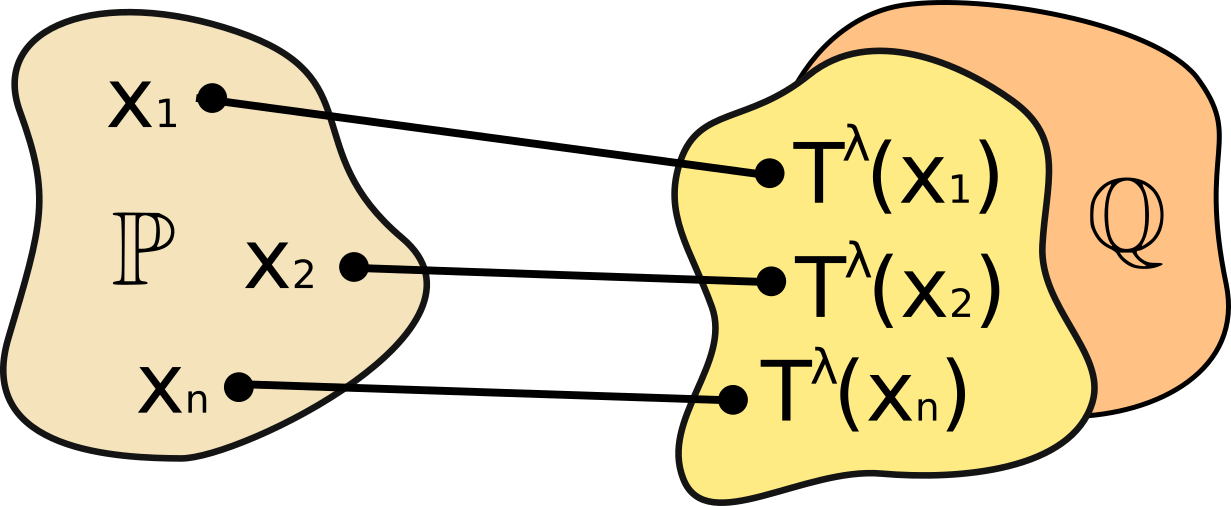}
  \end{center}
  \vspace{-4.0mm}
  \caption{\centering Illustration of Lemma \ref{lemma-optimal}. The solution $T^{\lambda}$ of \eqref{base-gan-c} is an OT map from $\mathbb{P}$ to $T^{\lambda}_{\#}\mathbb{P}$. In general, $T^{\lambda}_{\#}\mathbb{P}\neq \mathbb{Q}$ (Thm. \ref{lemma-biased}).}
  \label{fig:ot-map-biased}
\vspace{-4mm}
\end{wrapfigure}
\begin{lemma}[The solution of the regularized GAN is an OT map] Assume that $\lambda>0$ and the minimizer $T^{\lambda}$ of \eqref{base-gan-c} exists. Then $T^{\lambda}$ is an OT map between $\mathbb{P}$ and $\mathbb{Q}^{\lambda}\stackrel{\text{\normalfont def}}{=}T^{\lambda}_{\#}\mathbb{P}$ for cost $c(x,y)$, i.e., it minimizes
\vspace{-3mm}
\begin{equation*}
\inf_{T_{\#}\mathbb{P}=\mathbb{Q}^{\lambda}}\mathcal{R}_{c}(T)=\inf_{T_{\#}\mathbb{P}=\mathbb{Q}^{\lambda}}\int_{\mathcal{X}}c\big(x,T(x)\big)d\mathbb{P}(x).
\end{equation*}
\vspace{-4mm}
\label{lemma-optimal}
\end{lemma}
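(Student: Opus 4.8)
The plan is to argue by contradiction, exploiting the observation that any competitor map $T$ with the \emph{same} push-forward $T_{\#}\mathbb{P}=\mathbb{Q}^{\lambda}$ leaves the divergence term $\mathcal{D}(T_{\#}\mathbb{P},\mathbb{Q})$ in \eqref{base-gan-c} unchanged, so along such maps only the content term $\mathcal{R}_{c}$ can vary. First I would note that the feasible set $\{T:T_{\#}\mathbb{P}=\mathbb{Q}^{\lambda}\}$ is nonempty, since by the definition $\mathbb{Q}^{\lambda}\stackrel{\text{def}}{=}T^{\lambda}_{\#}\mathbb{P}$ it contains $T^{\lambda}$ itself; moreover $\mathcal{R}_{c}(T^{\lambda})<\infty$ because $T^{\lambda}$ minimizes \eqref{base-gan-c} and $\lambda>0$. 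Hence $\inf_{T_{\#}\mathbb{P}=\mathbb{Q}^{\lambda}}\mathcal{R}_{c}(T)$ is a well-defined quantity and the statement ``$T^{\lambda}$ minimizes it'' is meaningful.

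Next, suppose toward a contradiction that $T^{\lambda}$ does not attain this infimum, i.e.\ there is a measurable $\tilde{T}:\mathcal{X}\to\mathcal{Y}$ with $\tilde{T}_{\#}\mathbb{P}=\mathbb{Q}^{\lambda}$ and $\mathcal{R}_{c}(\tilde{T})<\mathcal{R}_{c}(T^{\lambda})$. Since $\tilde{T}_{\#}\mathbb{P}=\mathbb{Q}^{\lambda}=T^{\lambda}_{\#}\mathbb{P}$, we have $\mathcal{D}(\tilde{T}_{\#}\mathbb{P},\mathbb{Q})=\mathcal{D}(T^{\lambda}_{\#}\mathbb{P},\mathbb{Q})$, and therefore, using $\lambda>0$,
\[
\mathcal{D}(\tilde{T}_{\#}\mathbb{P},\mathbb{Q})+\lambda\mathcal{R}_{c}(\tilde{T})<\mathcal{D}(T^{\lambda}_{\#}\mathbb{P},\mathbb{Q})+\lambda\mathcal{R}_{c}(T^{\lambda}),
\]
which contradicts the assumption that $T^{\lambda}$ is a minimizer of \eqref{base-gan-c}. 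Consequently $\mathcal{R}_{c}(T^{\lambda})=\inf_{T_{\#}\mathbb{P}=\mathbb{Q}^{\lambda}}\mathcal{R}_{c}(T)$, i.e.\ $T^{\lambda}$ solves Monge's problem \eqref{ot-primal-form-monge} between $\mathbb{P}$ and $\mathbb{Q}^{\lambda}$ for the cost $c$, which is exactly the claim.

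Since this is essentially a one-line comparison argument, I do not expect a genuine obstacle. The only points needing mild care are (i) checking that the feasible set is nonempty and that $\mathcal{R}_{c}(T^{\lambda})$ is finite, so that ``$T^{\lambda}$ minimizes'' is not vacuous; and (ii) making sure the comparison is against arbitrary \emph{measurable maps} $T$ (not couplings) — which is precisely the class over which both \eqref{base-gan-c} and \eqref{ot-primal-form-monge} range, so no mass-splitting subtlety from \eqref{ot-primal-form} enters. If one additionally wanted to phrase the conclusion in Kantorovich form, one would remark that $[\mathrm{id},T^{\lambda}]_{\#}\mathbb{P}\in\Pi(\mathbb{P},\mathbb{Q}^{\lambda})$, but the lemma as stated asserts only optimality among maps, which the contradiction argument delivers directly.
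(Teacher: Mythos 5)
Your proposal is correct and follows essentially the same contradiction argument as the paper: assume a competitor $\tilde{T}$ with the same push-forward but strictly smaller content loss, substitute into \eqref{base-gan-c}, and contradict the minimality of $T^{\lambda}$. The additional remarks on nonemptiness of the feasible set and finiteness of $\mathcal{R}_{c}(T^{\lambda})$ are sound but not part of the paper's more terse proof.
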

\vspace{-1mm}

Our Lemma \ref{lemma-optimal} states that the minimizer $T^{\lambda}$ of a regularized GAN problem is \textit{always} an OT map between $\mathbb{P}$ and the distribution $\mathbb{Q}^{\lambda}$ generated by the same $T^{\lambda}$ from $\mathbb{P}$. However, below we prove that $\mathbb{Q}^{\lambda}\neq\mathbb{Q}$, i.e., $T^{\lambda}$ \textbf{does not} actually produce the distribution of HR images (Figure \ref{fig:ot-map-biased}). To begin with, we prove the following auxiliary result.

\begin{lemma}[Reformulation of the regularized GAN via distributions]
Under the assumptions of Lemma \ref{lemma-optimal}, let $\mathcal{X}=\mathcal{Y}$ be a compact subset of $\mathbb{R}^D$ with negligible boundary. Let $\mathbb{P}\!\in\! \mathcal{P(X)}$ be absolutely continuous,  $\mathbb{Q}\!\in\!\mathcal{P(Y)}$ and $c(x,y)\!=\!\|x-y\|^{p}$ with $p>1$.
Then \eqref{base-gan-c} is equivalent to
\vspace{-2mm}
\begin{equation}
\inf_{\mathbb{Q}'\in\mathcal{P}(\mathcal{Y})}\mathcal{F}(\mathbb{Q}')\stackrel{\text{\normalfont{def}}}{=}\!\inf_{\mathbb{Q}'\in\mathcal{P}(\mathcal{Y})}\big[\mathcal{D}(\mathbb{Q}', \mathbb{Q})+\lambda\cdot\text{\normalfont{Cost}}(\mathbb{P}, \mathbb{Q}')\big], 
\label{alternative-gan-c}
\end{equation}
and the solutions of \eqref{base-gan-c} and \eqref{alternative-gan-c} are related as $\mathbb{Q}^{\lambda}=T^{\lambda}_{\#}\mathbb{P}$, where $\mathbb{Q}^{\lambda}$ is the minimizer of \eqref{alternative-gan-c}.
\label{corollary-equivalence}
\end{lemma}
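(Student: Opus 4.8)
The plan is to collapse the infimum over transport maps in \eqref{base-gan-c} into a two-stage infimum, first over the generated distribution and then over the maps inducing it. Writing $\mathbb{Q}'=T_\#\mathbb{P}$ and noting that $\mathcal{D}(T_\#\mathbb{P},\mathbb{Q})=\mathcal{D}(\mathbb{Q}',\mathbb{Q})$ depends on $T$ only through $\mathbb{Q}'$, I would partition the admissible maps according to their push-forward and obtain
\[
\inf_{T}\big[\mathcal{D}(T_\#\mathbb{P},\mathbb{Q})+\lambda\mathcal{R}_c(T)\big]
=\inf_{\mathbb{Q}'}\Big[\mathcal{D}(\mathbb{Q}',\mathbb{Q})+\lambda\!\!\inf_{T_\#\mathbb{P}=\mathbb{Q}'}\!\!\mathcal{R}_c(T)\Big],
\]
where the inner infimum is exactly Monge's optimal transport cost \eqref{ot-primal-form-monge} from $\mathbb{P}$ to $\mathbb{Q}'$. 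To turn this into \eqref{alternative-gan-c} I then need two facts: that the outer infimum ranges over all of $\mathcal{P}(\mathcal{Y})$, and that Monge's cost coincides with the Kantorovich cost $\text{Cost}(\mathbb{P},\mathbb{Q}')$ of \eqref{ot-primal-form}.

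For the first fact, since $\mathbb{P}$ is absolutely continuous it is atomless, so on the standard Borel space $\mathcal{X}$ there exists, for every target $\mathbb{Q}'\in\mathcal{P}(\mathcal{Y})$, a measurable map $T$ with $T_\#\mathbb{P}=\mathbb{Q}'$; hence $\{T_\#\mathbb{P}:T\text{ measurable}\}=\mathcal{P}(\mathcal{Y})$ and no feasible point is dropped when we pass to the infimum over $\mathbb{Q}'$. For the second fact --- the crux of the proof --- I would invoke the Gangbo--McCann theorem (Brenier for $p=2$): for $c(x,y)=\|x-y\|^{p}$ with $p>1$ and $\mathbb{P}$ absolutely continuous w.r.t.\ Lebesgue on the compact domain $\mathcal{X}=\mathcal{Y}$ (whose negligible boundary makes it regular enough for the classical statement), the Kantorovich problem between $\mathbb{P}$ and any $\mathbb{Q}'$ has a unique optimal plan, and that plan is deterministic, of the form $[\text{id},T]_{\#}\mathbb{P}$. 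Therefore the Monge infimum is attained and equals $\text{Cost}(\mathbb{P},\mathbb{Q}')$; substituting this identity into the display yields precisely $\inf_{\mathbb{Q}'}\mathcal{F}(\mathbb{Q}')$ and, in particular, the equality of the two optimal values.

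It remains to match up the minimizers. Given the minimizer $T^{\lambda}$ of \eqref{base-gan-c} (which exists by the assumption of Lemma \ref{lemma-optimal}), Lemma \ref{lemma-optimal} together with the Monge--Kantorovich equality above gives $\mathcal{R}_c(T^{\lambda})=\text{Cost}(\mathbb{P},T^{\lambda}_{\#}\mathbb{P})$, so $\mathbb{Q}^{\lambda}:=T^{\lambda}_{\#}\mathbb{P}$ attains in \eqref{alternative-gan-c} the common optimal value and is hence a minimizer there; conversely, any minimizer $\mathbb{Q}^{\lambda}$ of \eqref{alternative-gan-c} lifts, through the optimal map $T^{\lambda}$ between $\mathbb{P}$ and $\mathbb{Q}^{\lambda}$ supplied by Gangbo--McCann, to a minimizer of \eqref{base-gan-c} satisfying $T^{\lambda}_{\#}\mathbb{P}=\mathbb{Q}^{\lambda}$.

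I expect the main obstacle to be making the Gangbo--McCann step fully rigorous: one must check that it applies to \emph{every} admissible target $\mathbb{Q}'$ rather than only to ``nice'' ones, that the induced map is $\mathbb{P}$-a.e.\ well defined and measurable, and that the integrals $\mathcal{R}_c(T)$ and $\text{Cost}(\mathbb{P},\mathbb{Q}')$ are finite --- all of which follow from compactness of $\mathcal{X}=\mathcal{Y}$ and absolute continuity of $\mathbb{P}$, but deserve to be spelled out rather than taken for granted.
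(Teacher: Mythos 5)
Your proof is correct and takes essentially the same route as the paper. The paper also passes from the infimum over maps $T$ to the infimum over push-forwards $\mathbb{Q}'=T_\#\mathbb{P}$, identifies the inner infimum with the Monge cost, and invokes the Brenier/Gangbo--McCann existence and uniqueness of deterministic optimal plans for $\mathbb{P}$ absolutely continuous and $c=\|x-y\|^p$, $p>1$ (cited there as \citep[Thm.~1.17]{santambrogio2015optimal}). Your explicit two-stage decomposition $\inf_T=\inf_{\mathbb{Q}'}\inf_{T_\#\mathbb{P}=\mathbb{Q}'}$ and the atomlessness argument for surjectivity of $T\mapsto T_\#\mathbb{P}$ onto $\mathcal{P}(\mathcal{Y})$ are slightly more explicit than the paper, which instead appeals to Lemma~\ref{lemma-optimal} to replace $\mathcal{R}_c(T)$ by $\text{Cost}(\mathbb{P},T_\#\mathbb{P})$ under the infimum; these are cosmetic rather than substantive differences.
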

\vspace{-1mm}

\vspace{-1mm}
In the following Theorem, we prove that, in general, 
$\mathbb{Q}^{\lambda}\neq\mathbb{Q}$ for the minimizer $\mathbb{Q}^{\lambda}$ of \eqref{alternative-gan-c}.

\begin{theorem}[The distribution solving the regularized GAN problem is always biased]
Under the assumptions of Lemma \ref{corollary-equivalence}, assume that the first variation \cite[\text{\normalfont{Definition 7.12}}]{santambrogio2015optimal} of the functional ${\mathbb{Q}'\mapsto\mathcal{D}(\mathbb{Q}',\mathbb{Q})}$ at the point $\mathbb{Q}'=\mathbb{Q}$ exists and is equal to zero. This means that $\mathcal{D}(\mathbb{Q}+\epsilon\Delta\mathbb{Q}, \mathbb{Q})=\mathcal{D}(\mathbb{Q}, \mathbb{Q})+ o(\epsilon)$ for every signed measure $\Delta\mathbb{Q}$ of zero total mass and $\epsilon\geq 0$ such that $\mathbb{Q}+\epsilon\Delta\mathbb{Q}\in\mathcal{P}(\mathcal{Y})$.  
Then, if $\mathbb{P}\neq\mathbb{Q}$, then  
$\mathbb{Q}'=\mathbb{Q}$ does not deliver the minimum to $\mathcal{F}$. 
\label{lemma-biased}
\end{theorem}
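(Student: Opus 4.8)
\emph{Proof plan.} The plan is to show that $\mathbb{Q}'=\mathbb{Q}$ is not a minimizer by exhibiting a nearby distribution at which $\mathcal{F}$ is strictly smaller. The natural candidate is the linear interpolation toward $\mathbb{P}$: for $\epsilon\in[0,1]$ put $\mathbb{Q}_\epsilon=(1-\epsilon)\mathbb{Q}+\epsilon\mathbb{P}=\mathbb{Q}+\epsilon\,\Delta\mathbb{Q}$ with $\Delta\mathbb{Q}=\mathbb{P}-\mathbb{Q}$. Since $\mathcal{X}=\mathcal{Y}$ we have $\mathbb{P}\in\mathcal{P}(\mathcal{Y})$, so $\mathbb{Q}_\epsilon\in\mathcal{P}(\mathcal{Y})$, and $\Delta\mathbb{Q}$ is a signed measure of zero total mass --- precisely the kind of perturbation the hypothesis on $\mathcal{D}$ controls. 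I would then estimate $\mathcal{F}(\mathbb{Q}_\epsilon)=\mathcal{D}(\mathbb{Q}_\epsilon,\mathbb{Q})+\lambda\,\text{Cost}(\mathbb{P},\mathbb{Q}_\epsilon)$ to first order in $\epsilon$ and compare with $\mathcal{F}(\mathbb{Q})=\lambda\,\text{Cost}(\mathbb{P},\mathbb{Q})$, using $\mathcal{D}(\mathbb{Q},\mathbb{Q})=0$.

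First I would dispatch the $\mathcal{D}$ term: since the first variation of $\mathbb{Q}'\mapsto\mathcal{D}(\mathbb{Q}',\mathbb{Q})$ at $\mathbb{Q}$ vanishes and $\mathcal{D}(\mathbb{Q},\mathbb{Q})=0$, applying the hypothesis to the admissible direction $\Delta\mathbb{Q}$ gives $\mathcal{D}(\mathbb{Q}_\epsilon,\mathbb{Q})=o(\epsilon)$ as $\epsilon\to 0^+$. Next I would bound the transport term from above by gluing couplings: take a plan $\pi\in\Pi(\mathbb{P},\mathbb{Q})$ that is optimal (or $\delta$-optimal) for $c$; since $c(x,x)=0$ the diagonal coupling $[\text{id},\text{id}]_{\#}\mathbb{P}$ has zero cost, and the mixture $(1-\epsilon)\pi+\epsilon\,[\text{id},\text{id}]_{\#}\mathbb{P}$ has first marginal $\mathbb{P}$ and second marginal $\mathbb{Q}_\epsilon$, whence $\text{Cost}(\mathbb{P},\mathbb{Q}_\epsilon)\le(1-\epsilon)\,\text{Cost}(\mathbb{P},\mathbb{Q})$. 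Combining, $\mathcal{F}(\mathbb{Q}_\epsilon)\le o(\epsilon)+\lambda(1-\epsilon)\,\text{Cost}(\mathbb{P},\mathbb{Q})=\mathcal{F}(\mathbb{Q})-\epsilon\big(\lambda\,\text{Cost}(\mathbb{P},\mathbb{Q})-o(1)\big)$.

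To finish I need $\text{Cost}(\mathbb{P},\mathbb{Q})>0$. For $c(x,y)=\|x-y\|^{p}$ this is standard: $\text{Cost}=W_p^p$ with $W_p$ the Wasserstein-$p$ distance, which is a genuine metric, so $\text{Cost}(\mathbb{P},\mathbb{Q})=0$ would force $\mathbb{P}=\mathbb{Q}$, contradicting the hypothesis; equivalently, a zero-cost plan must be concentrated on the diagonal $\{x=y\}$, which again forces $\mathbb{P}=\mathbb{Q}$. Since $\lambda>0$, the bracket $\lambda\,\text{Cost}(\mathbb{P},\mathbb{Q})-o(1)$ is strictly positive for all sufficiently small $\epsilon>0$, hence $\mathcal{F}(\mathbb{Q}_\epsilon)<\mathcal{F}(\mathbb{Q})$, and $\mathbb{Q}'=\mathbb{Q}$ does not deliver the minimum.

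The main obstacle --- really the only subtle point --- is the first step: I must check that $\Delta\mathbb{Q}=\mathbb{P}-\mathbb{Q}$ genuinely satisfies the hypotheses on the perturbation (zero total mass, and $\mathbb{Q}+\epsilon\Delta\mathbb{Q}$ a probability measure for small $\epsilon\ge 0$), and that the hypothesis provides only the soft estimate $o(\epsilon)$ with no rate, so the whole comparison must stay at first order and must rely on the transport term contributing a strictly negative linear term in $\epsilon$. The remaining ingredients (existence or $\delta$-optimality of $\pi$, boundedness and positivity of the OT cost on the compact domain) are routine and can simply be cited, e.g.\ \citep[Thm.~4.1]{villani2008optimal}.
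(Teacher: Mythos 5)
Your proposal is correct and follows essentially the same route as the paper: perturb toward $\mathbb{P}$ via $\Delta\mathbb{Q}=\mathbb{P}-\mathbb{Q}$, use the vanishing first variation to absorb $\mathcal{D}$ into $o(\epsilon)$, bound $\text{Cost}(\mathbb{P},\mathbb{Q}_\epsilon)\le(1-\epsilon)\text{Cost}(\mathbb{P},\mathbb{Q})$, and conclude from $\text{Cost}(\mathbb{P},\mathbb{Q})>0$. The only cosmetic difference is that you derive the convexity bound on the OT cost directly from an explicit mixture coupling, whereas the paper simply cites convexity of the transport cost \citep[Theorem 4.8]{villani2003topics}; you also make explicit the positivity of $\text{Cost}(\mathbb{P},\mathbb{Q})$ (via $W_p$ being a metric), which the paper merely asserts.
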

\vspace{-1mm}
Before proving Theorem \ref{lemma-biased}, we highlight that the assumption about the vanishing first variation of $\mathbb{Q}'\mapsto\mathcal{D}(\mathbb{Q}',\mathbb{Q})$ at $\mathbb{Q}'=\mathbb{Q}$ is \textit{reasonable}. In Appendix \ref{sec-first-variation}, we prove that this \underline{assumption holds} for the popular GAN discrepancies $\mathcal{D}(\mathbb{Q}',\mathbb{Q})$, e.g., $f$-divergences \cite{nowozin2016f} and certain Wasserstein distances \cite{arjovsky2017wasserstein}.
 \vspace{1mm}\newline
\hspace*{-3mm}\fbox{
    \parbox{\textwidth}{
    \vspace{-1mm}
        \begin{corollary}
    Under the assumptions of Theorem \ref{lemma-biased}, the solution $T^{\lambda}$ of regularized GAN \eqref{base-gan-c} is \underline{\normalfont biased}, i.e., it does not satisfy $T^{\lambda}_{\#}\mathbb{P}=\mathbb{Q}$ and {\normalfont does not transform LR images to true HR ones}.
    \end{corollary}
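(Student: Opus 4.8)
The plan is to obtain the Corollary directly from Theorem~\ref{lemma-biased} by contradiction, using the equivalence established in Lemma~\ref{corollary-equivalence} as the bridge from the distribution-level statement (``$\mathbb{Q}$ does not minimize $\mathcal{F}$'') to the map-level statement (``$T^{\lambda}_{\#}\mathbb{P}\neq\mathbb{Q}$''). The one hypothesis worth making explicit is $\mathbb{P}\neq\mathbb{Q}$: in unpaired SR the degradation is non-trivial, so the LR and HR distributions genuinely differ, and this is exactly the condition under which Theorem~\ref{lemma-biased} guarantees that $\mathbb{Q}'=\mathbb{Q}$ fails to deliver the minimum of $\mathcal{F}$. (If $\mathbb{P}=\mathbb{Q}$ the task is degenerate, solved by $T=\text{id}$, so there is no loss in assuming $\mathbb{P}\neq\mathbb{Q}$.)

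The argument itself is short. First I would invoke the standing assumption of Lemma~\ref{lemma-optimal} that the minimizer $T^{\lambda}$ of \eqref{base-gan-c} exists. Suppose, for contradiction, that $T^{\lambda}$ is unbiased, i.e.\ $T^{\lambda}_{\#}\mathbb{P}=\mathbb{Q}$. By Lemma~\ref{corollary-equivalence}, the pushforward $\mathbb{Q}^{\lambda}=T^{\lambda}_{\#}\mathbb{P}$ of the minimizer of \eqref{base-gan-c} is a minimizer of the equivalent distributional problem \eqref{alternative-gan-c}, i.e.\ it delivers the minimum of $\mathcal{F}$. Chaining the two identities, $\mathbb{Q}=T^{\lambda}_{\#}\mathbb{P}=\mathbb{Q}^{\lambda}$ would then deliver the minimum of $\mathcal{F}$. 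But Theorem~\ref{lemma-biased} --- whose first-variation hypothesis is inherited from the Corollary and which applies because $\mathbb{P}\neq\mathbb{Q}$ --- states precisely that $\mathbb{Q}'=\mathbb{Q}$ does \emph{not} deliver the minimum of $\mathcal{F}$. This contradiction forces $T^{\lambda}_{\#}\mathbb{P}\neq\mathbb{Q}$, which is the definition of $T^{\lambda}$ being biased; ``does not transform LR images to true HR ones'' is simply the verbal restatement of $T^{\lambda}_{\#}\mathbb{P}\neq\mathbb{Q}$.

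I do not expect any real obstacle here, since all the analytic work (absolute continuity of $\mathbb{P}$ producing a genuine Monge map, convexity of $\mathbb{Q}'\mapsto\text{Cost}(\mathbb{P},\mathbb{Q}')$, and the vanishing first variation of $\mathbb{Q}'\mapsto\mathcal{D}(\mathbb{Q}',\mathbb{Q})$) was already carried out in Lemmas~\ref{lemma-optimal},~\ref{corollary-equivalence} and Theorem~\ref{lemma-biased}. The only points requiring attention are bookkeeping: using Lemma~\ref{corollary-equivalence} in the direction ``$T^{\lambda}$ minimizes \eqref{base-gan-c} $\Rightarrow$ $T^{\lambda}_{\#}\mathbb{P}$ minimizes $\mathcal{F}$'' (so that unbiasedness of $T^{\lambda}$ really does force $\mathbb{Q}$ to be a minimizer of $\mathcal{F}$), and carrying along the assumption $\mathbb{P}\neq\mathbb{Q}$ so that Theorem~\ref{lemma-biased} is genuinely applicable. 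Finally, I would note that the conclusion is insensitive to non-uniqueness of the minimizer: should \eqref{base-gan-c} admit several minimizers $T^{\lambda}$, each one pushes $\mathbb{P}$ forward to some minimizer of $\mathcal{F}$, and by Theorem~\ref{lemma-biased} no minimizer of $\mathcal{F}$ can equal $\mathbb{Q}$, so every such $T^{\lambda}$ is biased.
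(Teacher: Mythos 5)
Your proposal is correct and follows essentially the same (implicit) route as the paper: chain Lemma~\ref{corollary-equivalence}'s identification $\mathbb{Q}^{\lambda}=T^{\lambda}_{\#}\mathbb{P}$ with Theorem~\ref{lemma-biased}'s conclusion that $\mathbb{Q}'=\mathbb{Q}$ does not minimize $\mathcal{F}$, yielding $T^{\lambda}_{\#}\mathbb{P}\neq\mathbb{Q}$ by contradiction. Your explicit bookkeeping of the $\mathbb{P}\neq\mathbb{Q}$ hypothesis and the remark about non-uniqueness of minimizers are accurate but do not change the argument.
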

    \vspace{-1mm}
    }
}

\vspace{1mm}
Additionally, we provide a toy example that  further illustrates the issue with the bias.
\begin{example}
Consider $\mathcal{X}=\mathcal{Y}=\mathbb{R}^{1}$. Let $\mathbb{P}=\frac{1}{2}\delta_{0}+ \frac{1}{2}\delta_{2}$, $\mathbb{Q}=\frac{1}{2}\delta_{1}+ \frac{1}{2}\delta_{3}$ be distributions concentrated at $\{0,2\}$ and $\{1,3\}$, respectively. Put $c(x,y)=|x-y|$ to be the content loss. Also, let $\mathcal{D}$ to be the OT cost for $|x-y|^2$. Then for $\lambda=0$ there exist two maps between $\mathbb{P}$ and $\mathbb{Q}$ that  deliver the same minimal value for \eqref{base-gan-c}, namely $T(0)=1,T(2)=3$ and $T(0)=3,T(2)=1$. For $\lambda>0$, the optimal solution of the problem \eqref{base-gan-c} is unique, \textbf{biased} and given by $T(0)=1-\frac{\lambda}{2},T(2)=3-\frac{\lambda}{2}$.
\label{example-toy-w2}
\end{example}

In Example \ref{example-toy-w2},\enskip    $T_{\#}^{\lambda}\mathbb{P}=\mathbb{Q}^{\lambda}$ \textit{never} matches $\mathbb{Q}$ exactly for $\lambda>0$. 
In \wasyparagraph\ref{sec-bias-experiments}, we conduct an evaluation of maps obtained via minimizing objective \eqref{base-gan-c} on the synthetic benchmark by \cite{korotin2021neural}. We empirically demonstrate that the bias exists and it is indeed a notable practical issue.
 
\noindent \textbf{Remarks.} Throughout this section, we enforce additional assumptions on \eqref{base-gan-c}, e.g., we restrict our analysis to content losses $c(\cdot,\cdot)$, which are powers of Euclidean norms $\|\cdot\|^{p}$. This is needed to make the derivations concise and to be able to exploit the available results in OT. We think that the provided results hold under more general assumptions and leave this question open for future studies.

\vspace{-1.5mm}\section{Relation between GANs and Neural Optimal Transport Solvers}
\label{sec-unbiased-ots}

\vspace{-1mm}
In this section, we analyze recent neural algorithms to compute OT maps \cite{fan2023neural,korotin2023neural,rout2022generative} and show their connection with regularized GANs. 
Below we show that their loss can be viewed as a particular (in a certain sense) GAN objective regularized with the content loss. To begin with, we recall that typical OT optimization objective is minimax and given by
\begin{eqnarray}
    [\text{Cost}(\mathbb{P},\mathbb{Q})=]\nonumber\\\!\!\!\!\!\!\qquad\!\!\!\!\sup_f\inf_{T:\mathcal{X}\mapsto\mathcal{Y}} \big[\int_{\mathcal{Y}} f(y)d\mathbb{Q}(y)+\int_{\mathcal{X}}\left\lbrace c\big(x,T(x)\big)\!\!-\!\!f(T(x))\right\rbrace d\mathbb{P}(x)\big],
    \label{ots-objective}
\end{eqnarray}
where $\sup_{f}$ is taken w.r.t. all potentials $f\in\mathcal{L}^{1}(\mathbb{Q})$. Under mild assumptions\footnote{In certain cases among the solutions in such a problem may be so-called \textit{fake} solutions which are not the OT maps. We refer to \cite{korotin2023kernel} for a fruitful discussion of this phenomena.}, by solving \eqref{ots-objective} one may recover the true (\textit{unbiased}) OT map $T^{*}$, see \cite[Lemma 4]{korotin2023neural}, \cite[\wasyparagraph 3,4]{fan2023neural}. In practice, $T,f$ are replaced with neural networks; as in GANs, they are optimized with the stochastic gradient descent-ascent techniques using the empirical samples from $\mathbb{P},\mathbb{Q}$.

Now let us get back to GANs. In \wasyparagraph\ref{sec-biased-ot}, we show that solutions of \eqref{base-gan-c} are, in general, \textit{biased} OT maps. Note, that this bias is related to the trade-off between components of GANs optimization objective \eqref{base-gan-c}, i.e., the quality of generated image and its similarity to the input. In order to resolve the bias issue, one can consider the loss $\mathcal{D}(T_{\#}\mathbb{P},\mathbb{Q})\equiv \mathcal{I}(T_{\#}\mathbb{P},\mathbb{Q})$ where $\mathcal{I}$ is the indicator function which takes two values:  zero if its inputs coincide and $+\infty$ when they differ. Then we can rewrite \eqref{base-gan-c} as
\begin{eqnarray}
\lambda \cdot \inf_{T:\mathcal{X}\mapsto\mathcal{Y}}\big[\frac{1}{\lambda}\mathcal{I}(T_{\#}\mathbb{P},\mathbb{Q})+\mathcal{R}_{c}(T)\big]
=\lambda \cdot\inf_{T:\mathcal{X}\mapsto\mathcal{Y}}\big[\mathcal{I}(T_{\#}\mathbb{P},\mathbb{Q})+\mathcal{R}_{c}(T)\big].
\label{gan-idt-loss}
\end{eqnarray}
Here we used the fact that $\lambda\cdot \mathcal{I}(\cdot,\cdot)=\mathcal{I}(\cdot,\cdot)$.
Note that the solution $\widehat{T}$ (if it exists)  of \eqref{gan-idt-loss} satisfies $\widehat{T}_{\#}\mathbb{P}=\mathbb{Q}$. Otherwise, the objective yields the value $+\infty$. Therefore, problem \eqref{gan-idt-loss} is equivalent to the optimization of the functional $\mathcal{R}_{c}(T)$ with the constraint $T_{\#}\mathbb{P}=\mathbb{Q}$. As a result, \eqref{gan-idt-loss} turns to be just the Monge OT problem \eqref{ot-primal-form-monge} multiplied by $\lambda>0$ and with the constraint incorporated directly to the loss via the indicator function $\mathcal{I}$. We conclude that its solution is an OT map, i.e., $\widehat{T}=T^{*}$, and the optimal value of \eqref{gan-idt-loss} is exactly $\lambda\cdot \text{Cost}(T_{\#}\mathbb{P},\mathbb{Q})$.

Unfortunately, optimizing objective \eqref{gan-idt-loss} in practice is non-trivial: even testing the condition $T_{\#}\mathbb{P}=\mathbb{Q}$ (i.e., computing $\mathcal{I}$) is hard, which makes it challenging to compute the loss. Note that
\begin{eqnarray}
\mathcal{I}(T_{\#}\mathbb{P},\mathbb{Q})=\sup_f \big[-\int_{\mathcal{X}} f(T(x))d\mathbb{P}(x) + \int_{ \mathcal{Y}} f(y)d\mathbb{Q}(y)\big],
\label{indicator-expression}
\end{eqnarray}
\vspace{-2mm}\newline where $f$ skims through all integrable w.r.t. $\mathbb{Q}$ and $T_{\#}\mathbb{P}$ functions. Indeed, if $T_{\#}\mathbb{P}=\mathbb{Q}$, the two integrals always coincide. Otherwise, there always exists a measurable function $f$ whose integrals over distributions differ. One may then multiply it by an arbitrary number to get any value of the expression, i.e., in this case, $\sup$ equals $+\infty$. We substitute \eqref{indicator-expression} to \eqref{gan-idt-loss} multiplied by $\frac{1}{\lambda}$ and get
\begin{eqnarray}
    \text{Cost}(\mathbb{P}, \mathbb{Q})=\inf_{T:\mathcal{X}\mapsto\mathcal{Y}}\mathcal{I}(T_{\#}\mathbb{P},\mathbb{Q})+\mathcal{R}_{c}(T) =\nonumber \\ \inf_{T:\mathcal{X}\mapsto\mathcal{Y}}\sup_f \big[\int_{y \in \mathcal{Y}} f(y)d\mathbb{Q}(y)+\int_{x\in\mathcal{X}}\left\lbrace c\big(x,T(x)\big)-f(T(x))\right\rbrace d\mathbb{P}(x)\big]
\label{gan-f-loss}
\end{eqnarray}
which almost coincides with \eqref{ots-objective}; the only difference is the order of $\inf$ and $\sup$. At this point, a natural question arises: what is the conceptual difference between \eqref{ots-objective} and \eqref{gan-f-loss}, and why neural OT works typically consider \eqref{ots-objective} rather than $\eqref{gan-f-loss}$? We believe that this is simply because the loss for the Neural OT methods is usually derived from the conventional dual formulation of OT \eqref{ot-dual-form-c} by expressing the $c$-transform, which yields the additional inner problem. In fact, when it comes to the practical optimization of \eqref{ots-objective} or \eqref{gan-f-loss}, the actual order of optimization does not matter too much. The overall performance depends more on a proper choice of hyperparameters of the optimization.

\vspace{-1.3mm}
\subsection{Regularized GANs vs. Optimal Transport Solver}
\vspace{-1.5mm}
\label{sec-gans-vs-ot}

In this subsection, we discuss similarities and differences between neural OT optimization objective \eqref{ots-objective} and the objective of regularized GANs \eqref{base-gan-c}. We establish an intriguing connection between GANs that use \textit{integral probability metrics} (IPMs) as $\mathcal{D}$. A~discrepancy $\mathcal{D}\!:\!\mathcal{P}(\mathcal{Y})\!\times\! \mathcal{P}(\mathcal{Y})\!\rightarrow\!\mathbb{R}_{+}$ is an IPM if
\vspace{-1mm}
\begin{equation}
    \hspace{-0.5mm}\mathcal{D}(\mathbb{Q}_{1},\mathbb{Q}_{2})\!=\!\sup_{f\in\mathcal{F}}\!\big[\!\int_{\mathcal{Y}}\!f(y)d\mathbb{Q}_{2}(y)\!-\!\int_{\mathcal{Y}}\!f(y)d\mathbb{Q}_{1}(y)\big]\!,
    \label{IPM}
\end{equation}
where the maximization is performed over some certain  class~$\mathcal{F}$ of functions (discriminators) ${f:\mathcal{Y}\rightarrow\mathbb{R}}$. The most popular example of $\mathcal{D}$ is the Wasserstein-1 loss \cite{arjovsky2017towards}, where $\mathcal{F}$ is a class of $1$-Lipschitz functions. For other IPMs, see \cite[Table 1]{mroueh2017sobolev}.

Substituting \eqref{IPM} to \eqref{base-gan-c} yields the saddle-point optimization problem for the \textbf{regularized IPM GAN}:\vspace{-1mm}
\begin{eqnarray}
\inf_{T:\mathcal{X}\rightarrow\mathcal{Y}}\bigg[\sup_{f\in\mathcal{F}}\left\lbrace\!\int_{\mathcal{Y}}\!f(y)d\mathbb{Q}(y)\!-\!\int_{\mathcal{X}}\!f\big(T(x)\big)d\mathbb{P}(x)\right\rbrace+
\lambda\int_{\mathcal{X}}c\big(x,T(x)\big)d\mathbb{P}(x)\rbrace\bigg]
\nonumber
=\\ \inf_{T:\mathcal{X}\rightarrow\mathcal{Y}}\sup_{f\in\mathcal{F}}\bigg[\int_{\mathcal{Y}} f(y)d\mathbb{Q}(y)+
\int_{\mathcal{X}}\left\lbrace  \lambda\cdot c\big(x,T(x)\big)-f\big(T(x)\big)\right\rbrace d\mathbb{P}(x)\bigg].\quad
\label{main-loss-gan}
\vspace{-4.0mm}
\end{eqnarray}
We emphasize that the expression inside \eqref{main-loss-gan} for $\lambda=1$ is similar to the expression in OTS optimization \eqref{ots-objective}. Below we highlight the \textbf{key differences} between \eqref{ots-objective} and \eqref{main-loss-gan}.  

\textbf{First}, in OTS the optimization over potential $f$ is unconstrained, while in IPM GAN it must belong to $\mathcal{F}$, some certain restricted class of functions. For example, when ${\mathcal{D}}$ is the Wasserstein-1 ($\mathbb{W}_{1}$) IPM, one has to use an additional penalization, e.g., the gradient penalty \cite{gulrajani2017improved}. This further complicates the optimization and adds hyperparameters which have to  be carefully selected. 

\textbf{Second}, the optimization of IPM GAN requires selecting a parameter $\lambda$ that balances the content loss $\mathcal{R}_{c}$ and the discrepancy $\mathcal{D}$. In OTS for all costs $\lambda\cdot c(x,y)$ with $\lambda>0$, the OT map $T^{*}$ is the same. 
\vspace{-0.3mm}

To conclude, even for $\lambda=1$, the IPM GAN problem generally does not match that of OTS. Table \ref{table-comparison-gan} summarizes the differences and the similarities between OTS and regularized IPM GANs.

\begin{table*}[!t]
\vspace{-2mm}
\centering
\scriptsize
\hspace*{-1.5mm}\begin{tabular}{c|c|c}
\toprule
 & \textbf{Optimal Transport Solver (OTS)} & \textbf{Regularized IPM GAN} \hspace{-2mm} \\ \midrule
\makecell{Minimax\\optimization\\objective} & \makecell{$\sup\limits_{f}\inf\limits_{T:\mathcal{X}\rightarrow\mathcal{Y}}\bigg[\int_{\mathcal{Y}} f(y)d\mathbb{Q}(y)+$\\
$\int_{\mathcal{X}}\left\lbrace c\big(x,T(x)\big)-f\big(T(x)\big)\right\rbrace d\mathbb{P}(x)\bigg]$} & 
\makecell{$\inf\limits_{T:\mathcal{X}\rightarrow\mathcal{Y}}\sup\limits_{f\in\mathcal{F}}\bigg[\int_{\mathcal{Y}} f(y)d\mathbb{Q}(y)+$\\
$\int_{\mathcal{X}}\left\lbrace  \lambda\cdot c\big(x,T(x)\big)-f\big(T(x)\big)\right\rbrace d\mathbb{P}(x)\bigg]$} \hspace{-2mm}
\\ \midrule
\makecell{Potential $f$\\(discriminator)} & {\color{LimeGreen}Unconstrained} $f\in L^{1}(\mathbb{Q})$ & \makecell{{\color{red}Constrained} $f\in \mathcal{F}\subset L^{1}(\mathbb{Q})$\\
A method to impose \\the constraint is needed.
} \hspace{-2mm} \\ \midrule
\makecell{Regularization\\weight $\lambda$} & {\color{LimeGreen}N/A} & {\color{red}Hyperparameter} choice required \hspace{-1.5mm}\\
\bottomrule
\end{tabular}
\vspace{-1.5mm}
\caption{\centering Comparison of the optimization objectives of OTS and regularized IPM GAN.}
\label{table-comparison-gan}
\vspace{-3mm}
\end{table*}

\vspace{-5mm}\section{Experimental Illustration}
\vspace{-2mm}
\label{sec-experiments}
In \wasyparagraph\ref{sec-bias-experiments}, we assess the bias of regularized IPM GANs by using the publicly available Wasserstein-2 benchmark\footnote{\url{https://github.com/iamalexkorotin/Wasserstein2Benchmark}} \cite{korotin2021neural}. In \wasyparagraph\ref{sec-experiments-aim}, we evaluate OTS on the large-scale unpaired AIM-19 dataset from \cite{lugmayr2019aim} and compare it with popular GAN-based solutions for unpaired image SR. The data is publicly available at \url{https://competitions.codalab.org/competitions/20164}.
The code is written in \texttt{PyTorch} and is available at 
\begin{center}\url{https://github.com/milenagazdieva/OT-Super-Resolution}.
\end{center}
The hyperparameters for Algorithm \ref{algorithm-ot} are listed in Table \ref{table-params} of Appendix \ref{sec-details}.
 
\noindent\textbf{Neural network architectures.} We use WGAN-QC's \cite{liu2019wasserstein} ResNet \cite{he2016deep} architecture for the potential $f_{\omega}$. In \wasyparagraph\ref{sec-bias-experiments}, where input and output images have the same size, we use UNet\footnote{\url{github.com/milesial/Pytorch-UNet}} \cite{ronneberger2015u} as a transport map $T_{\theta}$. In \wasyparagraph\ref{sec-experiments-aim}, the LR input images are $4\times 4$ times smaller than HR, so we use 
EDSR network \cite{lim2017enhanced}.

\noindent\textbf{Transport costs.} In \wasyparagraph\ref{sec-bias-experiments}, 
we use the \textit{mean squared error} (MSE), i.e.,\newline  ${c(x,y)=\|x-y\|^{2}/\dim (\mathcal{Y})}$. It is equivalent to the quadratic cost but is more convenient due to the normalization. In \wasyparagraph\ref{sec-experiments-aim}, 
we consider $c(x,y)=b(\text{Up}(x),y)$, where $b$ is a cost between the bicubically upsampled LR image $x^{\text{up}}= \text{Up}(x)$ and HR image $y$. We test $b$ defined as $\text{MSE}$ and the \underline{\textit{perceptual cost}} using features of a pre-trained VGG-16 network \cite{simonyan2014very}, see Appendix \ref{sec-details} for details.

\vspace{-4.5mm}
\subsection{Assessing the Bias in Regularized GANs}
\vspace{-2.5mm}
\label{sec-bias-experiments}

In this section, we empirically confirm the insight of \wasyparagraph\ref{sec-biased-ot} that the solution $T^{\lambda}$ of \eqref{base-gan-c} may not satisfy $T^{\lambda}_{\#}\mathbb{P}=\mathbb{Q}$. Notably, if $T^{\lambda}_{\#}\mathbb{P}=\mathbb{Q}$, then from our Lemma \ref{lemma-optimal} it follows that $T^{\lambda}\equiv T^{*}$, where $T^{*}$ is an OT map from $\mathbb{P}$ to $\mathbb{Q}$ for $c(x,y)$. Thus, to access the bias, it is reasonable to compare the learned map $T^{\lambda}$ with the ground truth OT map $T^{*}$ for $\mathbb{P}$, $\mathbb{Q}$.

For evaluation, we use the Wasserstein-2 benchmark \cite{korotin2021neural}. It provides high-dimensional continuous pairs $\mathbb{P}$, $\mathbb{Q}$ with an \underline{\textit{analytically known}} OT map $T^{*}$ for the quadratic cost ${c(x,y)=\|x-y\|^{2}}$. We use their ``Early" images benchmark pair. It simulates the image deblurring setup, i.e., $\mathcal{X}=\mathcal{Y}$ is the space of $64\times 64$ RGB images, $\mathbb{P}$ is blurry faces, $\mathbb{Q}$ is clean faces satisfying $\mathbb{Q}=T^{*}_{\#}\mathbb{P}$, where $T^{*}$ is an analytically known OT map, see the 1st and 2nd lines in Figure \ref{fig:ipm-vs-ots-benchmark}.

\begin{figure}
  \vspace{-2mm}\begin{center}
    \includegraphics[width=0.95\linewidth]{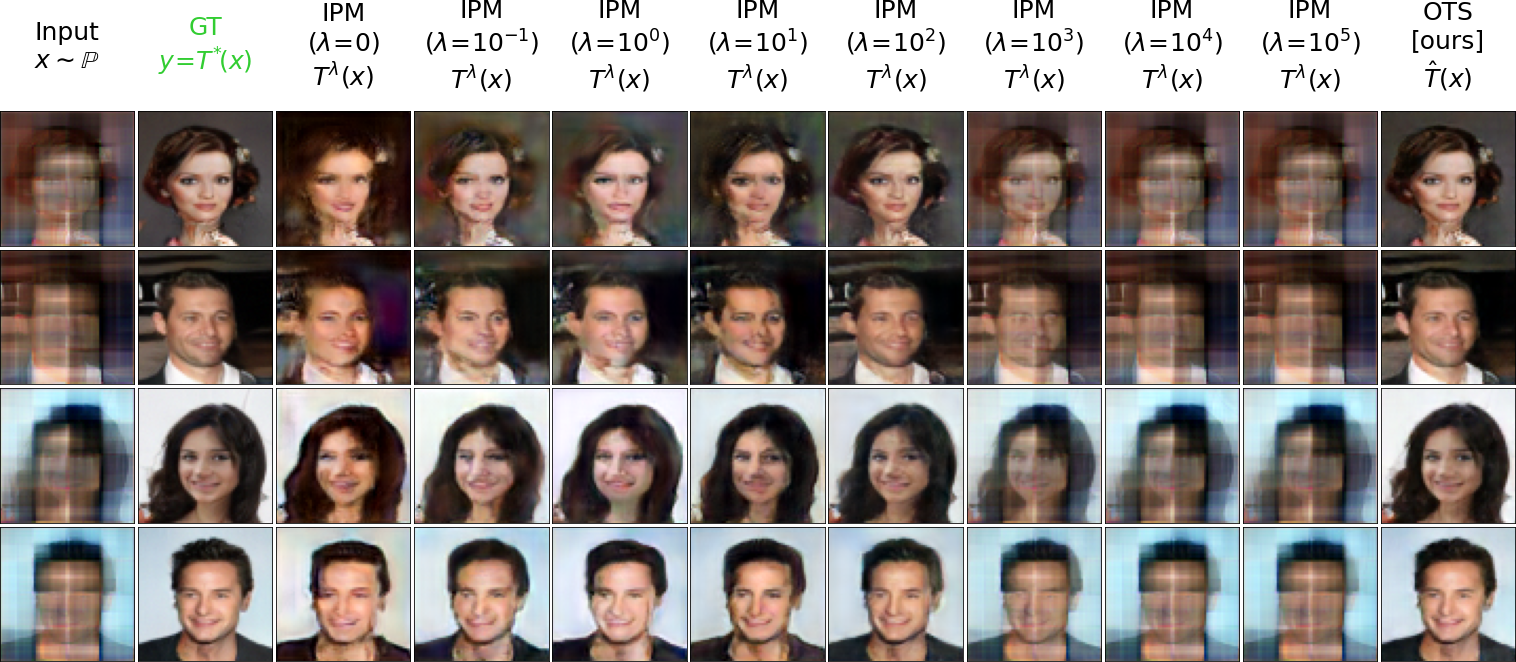}
  \end{center}
  \vspace{-3.5mm}
  \caption{\centering Comparison of OTS, regularized IPM GAN on the Wasserstein-2 benchmark. The 1st line shows blurry faces $x\sim\mathbb{P}$, the 2nd line, clean faces $y=T^{*}(x)$, where $T^{*}$ is the OT map from $\mathbb{P}$ to $\mathbb{Q}$. Next lines show maps from $\mathbb{P}$ to $\mathbb{Q}$ 
  fitted by the methods.}
  \label{fig:ipm-vs-ots-benchmark}
\end{figure}
\begin{table}[t!]
\vspace{-2.5mm}
\centering
\scriptsize
\begin{tabular}{@{\hskip1.9pt}c@{\hskip1.9pt}|@{\hskip1.9pt}c@{\hskip1.9pt}|@{\hskip1.9pt}c@{\hskip1.9pt}|@{\hskip1.9pt}c@{\hskip1.9pt}|@{\hskip1.9pt}c@{\hskip1.9pt}|@{\hskip1.9pt}c@{\hskip1.9pt}|@{\hskip1.9pt}c@{\hskip1.9pt}|@{\hskip1.9pt}c@{\hskip1.9pt}|@{\hskip1.9pt}c@{\hskip1.9pt}|@{\hskip1.9pt}c@{\hskip1.9pt}}\toprule
\multirow{2}{*}{\makecell{\textit{Metrics}/ \\ \textit{Method}}} & \multicolumn{8}{c|}{\textbf{Regularized} \textbf{IPM GAN} (WGAN-GP, $\lambda_{\text{GP}}=10$)} & \multirow{2}{*}{\makecell{\textbf{OTS}}} \\
\cline{2-9}
& $\lambda=0$ & $\lambda=10^{-1}$ & $\lambda=10^{0}$ & $\lambda=10^{1}$ & $\lambda=10^{2}$ & $\lambda=10^{3}$ & $\lambda=10^{4}$ & $\lambda=10^{5}$ & \\
\midrule
$\mathcal{L}^{2}\text{-UVP}\downarrow$ & $25.2\%$ & $16.7\%$ & $17.7\%$ & $12.0\%$ & $\textbf{4.0}\%$ & $14.0\%$ & $28.5\%$ & $30.5\%$ & $\textbf{1.4}\%$\\
\textit{FID}$\downarrow$ & $57.24$ & $46.23$ & $40.04$ & $42.89$ & $\textbf{24.25}$ & $187.95$ & $332.7$ & $334.7$ & $\mathbf{15.65}$ \\
\textit{PSNR}$\uparrow$ & $17.90$ & $19.76$ & $19.34$ & $20.81$ & $\textbf{25.58}$ & $19.91$ & $16.90$ & $16.52$ & $\mathbf{30.02}$ \\
\textit{SSIM}$\uparrow$ & $0.565$ & $0.655$ & $0.656$ & $0.689$ & $\textbf{0.859}$ & $0.702$ & $0.520$ & $0.498$ & $\mathbf{0.933}$\\
\textit{LPIPS}$\downarrow$ & $0.135$ & $0.093$ & $0.099$ & $0.081$ & $\textbf{0.031}$ & $0.172$ & $0.429$ & $0.446$ & $\mathbf{0.013}$ \\
\bottomrule
\end{tabular}
\vspace{-2mm}
\caption{\centering\protect{Quantitative evaluation of restoration maps fitted by the regularized IPM GAN, OTS using the Wasserstein-2 images benchmark \cite{korotin2021neural}}.}
\label{table-benchmark}
\vspace{-4mm}
\end{table}

To quantify the learned maps from $\mathbb{P}$ to $\mathbb{Q}$, we use PSNR, SSIM, LPIPS \cite{zhang2018perceptual}, FID \cite{heusel2017gans} metrics. Similar to \cite{Wei_2021_CVPR}, we use the AlexNet-based \cite{krizhevsky2012imagenet} LPIPS. FID and LPIPS are practically the \textit{most important} since they better correlate with the human perception of the image quality. We include PSNR, SSIM as popular evaluation metrics, but they are known to \textit{badly measure perceptual quality} \cite{zhang2018perceptual,nilsson2020understanding}. Due to this, higher PSNR, SSIM values do not necessarily mean better performance. We provide additional \underline{\textit{details}} on these metrics in Appendix \ref{sec-details-eval}. In this section, we additionally use the $\mathcal{L}^{2}\text{-UVP}$ \cite[\wasyparagraph 4.2]{korotin2021neural} metric. 

\vspace{-0.5mm}
On the benchmark, we compare OTS \eqref{ots-objective} and IPM GAN \eqref{base-gan-c}. We use MSE as the content loss $c(x,y)$. In IPM GAN, we use the Wasserstein-1 ($\mathbb{W}_{1}$) loss with the gradient penalty ${\lambda_{\text{GP}}=10}$ \cite{gulrajani2017improved} as $\mathcal{D}$. We do $10$ discriminator updates per $1$ generator update and train the model for 15K generator updates. For fair comparison, the rest hyperparameters match those of OTS algorithm. We train the regularized WGAN-GP with various coefficients of content loss $\lambda\in \{0, 10^{-1}, \dots, 10^{5}\}$ and show the learned maps $T^{\lambda}$ and the map $\hat{T}$ obtained by OTS in Figure \ref{fig:ipm-vs-ots-benchmark}.

\noindent\textbf{Results.} The performance of the regularized IPM GAN \underline{\textit{significantly}} depends on the choice of the content loss value $\lambda$. For high values $\lambda\geq 10^{3}$, the learned map is close to the identity as expected. For small values $\lambda\leq 10^{1}$, the regularization has little effect, and WGAN-GP solely struggles to fit a good restoration map. Even for the best performing $\lambda=10^{2}$ all metrics are notably worse than for OTS. Importantly, \underline{\textit{OTS decreases the burden of parameter searching}} as there is no parameter $\lambda$. 

\vspace{-5.1mm}
\subsection{Large-Scale Evaluation}
\vspace{-2.3mm}
\label{sec-experiments-aim}
For evaluating OTS method at a large-scale, we employ the dataset by \cite{lugmayr2019aim} of AIM 2019 Real-World Super-Resolution Challenge (Track 2). The train part contains 800 HR images with up to 2040 pixels width or height and 2650 unpaired LR images of the same shape. They are constructed using artificial, but realistic, image degradations. We quantitatively evaluate OTS method on the validation part of AIM dataset that contains 100 pairs of LR-HR images.

\begin{figure*}[!t]
\centering
\vspace{-3mm}
\includegraphics[width=0.8\linewidth]{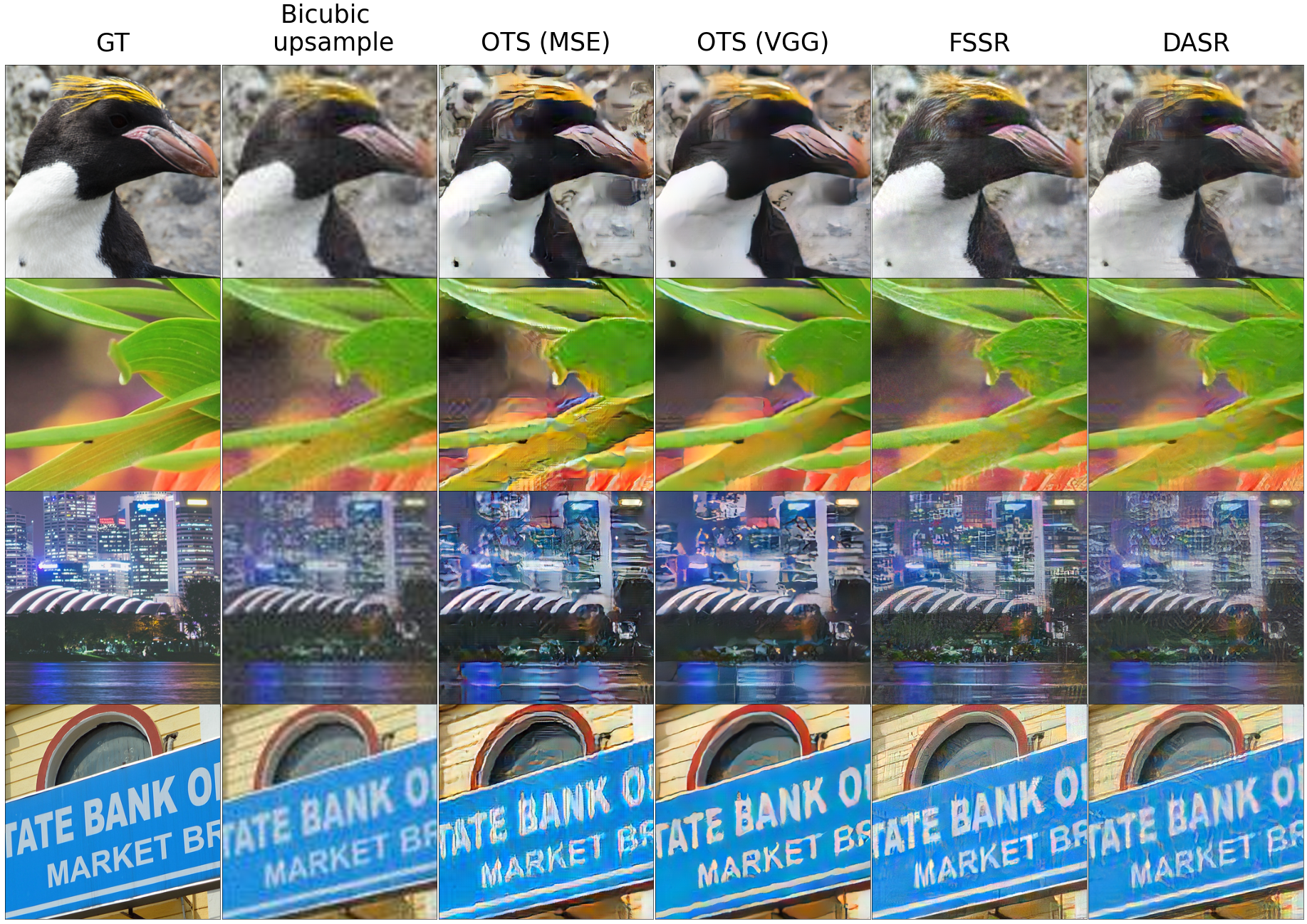}
\vspace{-2mm}
\caption{\centering Qualitative results of OTS, bicubic upsample, FSSR and DASR \protect\linebreak on AIM 2019 dataset (350$\times$350 crops).}
\label{fig:aim-results}
\vspace{-7mm}
\end{figure*}

\noindent\textbf{Baselines}.
We compare OTS on AIM dataset with the bicubic upsample, FSSR \cite{fritsche2019frequency} and DASR \cite{Wei_2021_CVPR} methods. FSSR method is the winner of AIM 2019 Challenge; DASR is another competetive method for unpaired image SR. Both methods utilize the idea of frequency separation and solve the problem in two steps. First, they train a network to generate LR images. Next, they train a super-resolution network using generated pseudo-pairs. Differently to FSSR, DASR also employs real-world LR images for training SR network taking into consideration the domain gap between generated and real-world LR images. Both methods utilize several losses, e.g., adversarial and perceptual, either on the entire image or on its high/low frequency components. 
For testing FSSR and DASR, we use their official code and pretrained models.

\looseness=-1
\noindent \textbf{Implementation details.}
We train the networks using 128$\times$128 HR, 32$\times$32 LR random \textit{patches} of images augmented via random flips, rotations. We conduct separate experiments using EDSR as the transport map and either MSE or perceptual cost, and denote them as OTS (MSE), OTS (VGG) respectively. 

\noindent\textbf{Metrics.} We calculate PSNR, SSIM, LPIPS, FID. FID is computed on $32\hspace{-0.7mm}\times\hspace{-0.7mm}32$ patches of LR test images upsampled by the method in view w.r.t.\ random patches of test HR.
We use 50k patches to compute FID. 
The other metrics are computed on the \textit{entire} upsampled LR test and HR test images.

\begin{wraptable}{r}{8.3cm}
\vspace{-4.5mm}
\centering
\scriptsize
\begin{tabular}{c| c c c c}
\toprule
 \textbf{Method} & \textbf{FID} $\downarrow$ & \textbf{PSNR} $\uparrow$ & \textbf{SSIM} $\uparrow$ & \textbf{LPIPS} $\downarrow$\\ \midrule
 \makecell{Bicubic upsample} &  178.59 & {\color{LimeGreen}22.39} &  {\color{LimeGreen}0.613} & 0.688\\ \midrule
 \makecell{OTS (MSE)} & 139.17 & 19.73 &  0.533 & 0.456\\  \midrule
 \makecell{OTS (VGG)} & {\color{blue}89.04} & \underline{20.96} &  {\color{blue}0.605} & {\color{blue}0.380}\\  \midrule
 FSSR & {\color{LimeGreen}53.92} & 20.83 &  0.514 & \underline{0.390}\\ \midrule
 DASR & \underline{124.09} & {\color{blue}21.79} &  \underline{0.577} & {\color{LimeGreen}0.346}\\ \bottomrule
 \end{tabular}
 \vspace{-2.8mm}
\caption{\centering \small Comparison of OTS with FSSR, DASR on AIM19 dataset. The 1st, 2nd, 3rd best results are highlighted in {\color{LimeGreen}green}, {\color{blue}blue} and \underline{underlined}, respectively.}
\label{table-aim}
\vspace{-5.3mm}
\end{wraptable}
\looseness=-1
\noindent\textbf{Experimental results} are given in 
Table \ref{table-aim}, Figure \ref{fig:aim-results}. The table shows that results across different metrics are not always consistent, but one key takeaway is that incorporating a perceptual cost function in OTS significantly enhances its performance. According to perceptual metrics (FID and LPIPS), OTS with a perceptual cost outperforms DASR and is comparable to FSSR. Specifically, OTS (VGG) surpasses DASR in FID, while its LPIPS score is only slightly worse, reinforcing its strong perceptual quality. Moreover, OTS (VGG) outperforms FSSR in PSNR, SSIM, and, more importantly, LPIPS, although it is slightly worse in FID. Notably, bicubic upsampling achieves the highest PSNR and SSIM scores, yet its visual quality remains inferior, further confirming the limitations of these metrics (see \wasyparagraph 6.1, Appendix \ref{sec-details-eval}). Qualitative analysis further supports these findings, showing that OTS with a perceptual cost function better deals with noise artifacts.
\underline{Additional results} are given in Appendix \ref{sec-additional-results}. We also demonstrate the \underline{bias issue} of FSSR and DASR in Appendix \ref{sec-aim-bias}.
Overall, our evaluation indicates that OTS is a promising approach for unpaired image SR, demonstrating applicability to large-scale experiments where it achieves competitive performance, while also helping to mitigate the bias issue common in GAN-based methods.

\vspace{-3mm}
\section{Conclusion}
\label{sec-discussion}
\vspace{-3.7mm}
Our analysis connects content losses in GANs with OT and 
reveals the bias issue. Content losses are used in a wide range of tasks besides SR, e.g., in the style transfer and domain adaptation tasks \cite{shrivastava2017learning,taigman2017unsupervised,zhu2017unpaired}. Our results demonstrate that GAN-based methods in all these tasks may \textit{a priori lead to biased solutions}. In certain cases it is undesirable, e.g., in medical applications \cite{bissoto2021gan}. Failing to learn true data statistics (and learning biased ones instead), e.g., in the super-resolution of MRI images, might lead to a wrong diagnosis made by a doctor due to SR algorithm drawing \textbf{inexistent details} on the scan. Thus, we think it is essential to emphasize and alleviate the bias issue.

\begin{acknowledgements}
    The work was supported by the grant for research centers in the field of AI provided by the Ministry of Economic Development of the Russian Federation in accordance with the agreement 000000C313925P4F0002 and the agreement with Skoltech №139-10-2025-033.
\end{acknowledgements}

\bibliographystyle{spmpsci.bst} 
\bibliography{references}



\newpage
\appendix

\section{Proofs}
\begin{proof}[Proof of Lemma \ref{lemma-optimal}] 
Assume that $T^{\lambda}$ is not an optimal map between $\mathbb{P}$ and $T^{\lambda}_{\#}\mathbb{P}$. Then there exists a more optimal $T^{\dagger}$ satisfying ${T^{\dagger}_{\#}\mathbb{P}=T_{\#}^{\lambda}\mathbb{P}}$ and $\mathcal{R}_{c}(T^{\dagger})<\mathcal{R}_{c}(T^{\lambda})$.
We substitute this $T^{\dagger}$ to \eqref{base-gan-c} and derive
\vspace{-1.5mm}
\begin{eqnarray}
\mathcal{D}(T^{\dagger}_{\#}\mathbb{P},\mathbb{Q})+\lambda \mathcal{R}_{c}(T^{\dagger})=\mathcal{D}(T^{\lambda}_{\#}\mathbb{P},\mathbb{Q})+
\lambda \mathcal{R}_{c}(T^{\dagger})<
\mathcal{D}(T^{\lambda}_{\#}\mathbb{P},\mathbb{Q})+\lambda \mathcal{R}_{c}(T^{\lambda}),
\nonumber
\end{eqnarray}
which is a contradiction, since $T^{\lambda}$ is a~minimizer of \eqref{base-gan-c}, but $T^{\dagger}$ provides the smaller value.
\end{proof}

\begin{proof}[Proof of Lemma \ref{corollary-equivalence}] 
We derive 
\vspace{-2.5mm}
\begin{eqnarray}
    \inf_{T:\mathcal{X}\mapsto\mathcal{Y}} \big[\mathcal{D}(T_{\#}\mathbb{P},\mathbb{Q})+\lambda \mathcal{R}_{c}(T) \big]
    = \inf_{T:\mathcal{X}\mapsto\mathcal{Y}} \big[\mathcal{D}(T_{\#}\mathbb{P},\mathbb{Q})+\lambda \int_{\mathcal{X}}c\big(x,T(x)\big)d\mathbb{P}(x)\big]=  \label{gan-problems-relation-line1}\\
    \inf_{T:\mathcal{X}\mapsto\mathcal{Y}} \big[\mathcal{D}(T_{\#}\mathbb{P},\mathbb{Q})+\lambda\cdot  \text{Cost} (\mathbb{P}, T_{\#}\mathbb{P})\big]
    = \inf_{\mathbb{Q}'\in\mathcal{P}(\mathcal{Y})}\big[\mathcal{D}(\mathbb{Q}', \mathbb{Q})+\lambda\cdot \text{Cost}(\mathbb{P}, \mathbb{Q}')\big].
    \label{gan-problems-relation-line2}
\label{gan-problems-relation}
\end{eqnarray}

In transition from \eqref{gan-problems-relation-line1} to \eqref{gan-problems-relation-line2}, we use the definition of OT cost \eqref{ot-primal-form-monge} and our Lemma \ref{lemma-optimal}, which states that the minimizer $T^{\lambda}$ of \eqref{base-gan-c} is an OT map, i.e., $\int_{\mathcal{X}}c\big(x,T^{\lambda}(x)\big)d\mathbb{P}(x)=\text{Cost}(\mathbb{P},T^{\lambda}_{\#}\mathbb{P})$. The equality in \eqref{gan-problems-relation-line2} follows from the fact that $\mathbb{P}$ is abs. cont. and $c(x,y)=\|x-y\|^{p}$: for all $\mathbb{Q}'\in\mathcal{P}(\mathcal{Y})$ there exists a (unique) solution $T$ to the Monge OT problem \eqref{ot-primal-form-monge} for $\mathbb{P}, \mathbb{Q}'$ \cite[\text{Thm. 1.17}]{santambrogio2015optimal}.
\end{proof}

\begin{proof}[Proof of Theorem \ref{lemma-biased}]
Let $\Delta \mathbb{Q}=\mathbb{P}-\mathbb{Q}$ denote the difference measure of $\mathbb{P}$ and $\mathbb{Q}$. It has zero total mass and $\forall\epsilon\in[0,1]$ it holds that $\mathbb{Q}+\epsilon\Delta\mathbb{Q}=\epsilon\mathbb{P}+(1-\epsilon)\mathbb{Q}$ is a mixture distribution of probability distributions $\mathbb{P}$ and $\mathbb{Q}$. As a result, for all $\epsilon\in[0,1]$, we have 
\vspace{-1.5mm}
\begin{eqnarray}
    \mathcal{F}(\mathbb{Q}+\epsilon\Delta \mathbb{Q}) = \mathcal{D}(\mathbb{Q}+\epsilon\Delta \mathbb{Q}, \mathbb{Q}) +  \lambda\cdot\text{Cost}(\mathbb{P}, \mathbb{Q}+\epsilon\Delta \mathbb{Q})=
    \nonumber
    \\
    \mathcal{D}(\mathbb{Q}, \mathbb{Q}) + o(\epsilon) +
    \lambda\cdot\text{Cost}(\mathbb{P}, \epsilon\mathbb{P}+(1-\epsilon)\mathbb{Q})
    \label{lemma2-before-villani-theorem} \leq
    \\
    o(\epsilon) + \lambda\cdot\epsilon \cdot\text{Cost}(\mathbb{P}, \mathbb{P}) + \lambda\cdot(1-\epsilon)\cdot \text{Cost}(\mathbb{P}, \mathbb{Q}) =
   o(\epsilon) + \lambda\cdot(1-\epsilon)\cdot \text{Cost}(\mathbb{P}, \mathbb{Q}) =
    \label{lemma2-before2-villani-theorem}
    \\
   \underbrace{\lambda\cdot \text{Cost}(\mathbb{P}, \mathbb{Q})}_{=\mathcal{F}(\mathbb{Q})} - \lambda \cdot\epsilon\cdot \underbrace{\text{Cost}(\mathbb{P}, \mathbb{Q})}_{>0} + o(\epsilon),
    \nonumber
\end{eqnarray}
where in transition from \eqref{lemma2-before-villani-theorem} to \eqref{lemma2-before2-villani-theorem}, we use $\mathcal{D}(\mathbb{Q},\mathbb{Q})=0$ and exploit the convexity of the OT cost \cite[Theorem 4.8]{villani2003topics}. In  \eqref{lemma2-before2-villani-theorem}, we use $\text{Cost}(\mathbb{P},\mathbb{P})=0$.
We see that $\mathcal{F}(\mathbb{Q}\!+\!\epsilon\Delta \mathbb{Q})$ is smaller then $\mathcal{F}(\mathbb{Q})$ for sufficiently small $\epsilon>0$, i.e.,
$\mathbb{Q}'\!=\!\mathbb{Q}$ does not minimize $\mathcal{F}$.
\end{proof}

\begin{proof}[Proof of Example \ref{example-toy-w2}]
Let $T(0)=t_{0}$ and $T(2)=t_{2}$. Then $T_{\#}\mathbb{P}=\frac{1}{2}\delta_{t_{0}}+\frac{1}{2}\delta_{t_{2}}$, and now  \eqref{base-gan-c} becomes
\vspace{-1.5mm}
$$\min_{t_{0},t_{2}}\bigg[\min\big\{\frac{1}{2}(t_{0}-1)^{2}+\frac{1}{2}(t_{2}-3)^{2}; \frac{1}{2}(t_{0}-3)^{2}+\frac{1}{2}(t_{2}-1)^{2}\big\}+\lambda \big\{\frac{1}{2}|0-t_0|+\frac{1}{2}|2-t_{2}|\big\}\bigg],$$
\vspace{-5mm}

where the second term is $\mathcal{R}_{c}(T)$ and the first term is the OT cost $\mathcal{D}(T_{\#}\mathbb{P}, \mathbb{Q})$ expressed as the minimum over the transport costs of two possible transport maps $t_0\mapsto 1; t_{2}\mapsto 3$ and $t_0\mapsto 3; t_{2}\mapsto 1$. The minimizer can be derived analytically and equals $t_{0}=1-\frac{\lambda}{2},t_{2}=3-\frac{\lambda}{2}$.
\end{proof}

\section{First Variations of GAN Discrepancies Vanish at the Optimum}
\label{sec-first-variation}
We demonstrate that the first variation of $\mathbb{Q}'\mapsto \mathcal{D}(\mathbb{Q}',\mathbb{Q})$ is equal to zero at $\mathbb{Q}'=\mathbb{Q}$ for common GAN discrepancies $\mathcal{D}$. This suggests that the corresponding assumption of our Theorem \ref{lemma-biased} is relevant.

To begin with, for a functional $\mathcal{G}:\mathcal{P}(\mathcal{Y})\rightarrow \mathbb{R}\cup \{\infty\}$, we recall the definition of its \textbf{first variation}. A measurable function $\delta\mathcal{G}[\mathbb{Q}]:\mathcal{Y}\rightarrow\mathbb{R}\cup \{\infty\}$ is called \textbf{the first variation} of $\mathcal{G}$ at a point $\mathbb{Q}\in\mathcal{P}(\mathcal{Y})$, if, for every measure $\Delta\mathbb{Q}$ on $\mathcal{Y}$ with zero total mass ($\int_{\mathcal{Y}}1\hspace{0.5mm} d\Delta\mathbb{Q}(y)=0$),  
\begin{equation}
\mathcal{G}(\mathbb{Q}+\epsilon \Delta\mathbb{Q})=\mathcal{G}(\mathbb{Q})+\epsilon\int_{\mathcal{Y}}\delta\mathcal{G}[\mathbb{Q}](y)\hspace{0.5mm}d\Delta\mathbb{Q}(y)+ o(\epsilon)
\label{first-variation-def}
\end{equation}
for all $\epsilon\geq 0$ such that $\mathbb{Q}+\epsilon\Delta\mathbb{Q}$ is a probability distribution. Here for the sake of simplicity we suppressed several minor technical aspects, see \cite[Definition 7.12]{santambrogio2015optimal} for details. Note that the first variation is defined \textbf{up to an additive constant}.

Now we recall the definitions of three most popular GAN discrepancies and demonstrate that their first variation is zero at an optimal point. We consider $f$-divergences \cite{nowozin2016f}, Wasserstein distances \cite{arjovsky2017wasserstein}.

\underline{\textbf{Case 1}} ($f$-divergence). Let $f:\mathbb{R}_{+}\rightarrow \mathbb{R}$ be a convex and differentiable function satisfying $f(1)=0$. The $f$-divergence between $\mathbb{Q}',\mathbb{Q}\in\mathcal{P}(\mathcal{Y})$ is defined by
\begin{equation}
    \mathcal{D}_{f}(\mathbb{Q}',\mathbb{Q})\stackrel{\normalfont{def}}{=}\int_{\mathcal{Y}}f\bigg(\frac{d\mathbb{Q}'(y)}{d\mathbb{Q}(y)}\bigg)d\mathbb{Q}(y).
    \label{f-divergence}
\end{equation}
The divergence takes finite value only if $\mathbb{Q'}\ll \mathbb{Q}$, i.e., $\mathbb{Q'}$ is absolutely continuous w.r.t.~$\mathbb{Q}$. Vanilla GAN loss \cite{goodfellow2014generative} is a case of $f$-divergence \cite[Table 1]{nowozin2016f}.

We define $\mathcal{G}(\mathbb{Q}')\stackrel{\normalfont def}{=}\mathcal{D}_{f}(\mathbb{Q}',\mathbb{Q})$. For $\mathbb{Q}'=\mathbb{Q}$ and some $\Delta\mathbb{Q}$ such that $\mathbb{Q}+\epsilon\Delta\mathbb{Q}\in\mathcal{P}(\mathcal{Y})$ we derive
\begin{eqnarray}
    \mathcal{G}(\mathbb{Q}+\epsilon\Delta\mathbb{Q})=\int_{\mathcal{Y}}f\bigg(\frac{d\mathbb{Q}(y)}{d\mathbb{Q}(y)}+\epsilon\frac{d\Delta\mathbb{Q}(y)}{d\mathbb{Q}(y)}\bigg)d\mathbb{Q}(y)=
    \int_{\mathcal{Y}}f\bigg(1+\epsilon\frac{d\Delta\mathbb{Q}(y)}{d\mathbb{Q}(y)}\bigg)d\mathbb{Q}(y)
    \label{before-taylor}
    \\
    =\int_{\mathcal{Y}}f(1)d\mathbb{Q}(y)+\int_{\mathcal{Y}}f'(1)\frac{d\Delta\mathbb{Q}(y)}{d\mathbb{Q}(y)}d\mathbb{Q}(y)+o(\epsilon)=
    \mathcal{G}(\mathbb{Q})+\int_{\mathcal{Y}}f'(1)d\Delta\mathbb{Q}(y)+o(\epsilon),
    \label{after-taylor}
\end{eqnarray}
where in transition from \eqref{before-taylor} to \eqref{after-taylor}, we consider the Taylor series w.r.t.~$\epsilon$ at $\epsilon=0$. We see that $\delta\mathcal{G}[\mathbb{Q}](y)\equiv f'(1)$ is constant, i.e., the first variation of $\mathbb{Q}'\mapsto\mathcal{D}_{f}(\mathbb{Q}',\mathbb{Q})$ vanishes at $\mathbb{Q}'=\mathbb{Q}$.

\underline{\textbf{Case 2}} (Wasserstein distance).
If in OT formulation \eqref{ot-primal-form} the cost function $c(x,y)$ equals $\|x-y\|^p$ with $p\geq1$, then $\big[\text{Cost}(\mathbb{P}, \mathbb{Q})\big]^{1/p}$ is called the \textit{Wasserstein distance} ($\mathbb{W}_p$). Generative models which use $\mathbb{W}_{p}^{p}$ as the discrepancy are typically called the Wasserstein GANs (WGANs). The most popular case is $p=1$ \cite{arjovsky2017wasserstein,gulrajani2017improved}, but more general cases appear in related work as well, see \cite{liu2019wasserstein,mallasto2019q}.

The first variation of $\mathcal{G}(\mathbb{Q}')\stackrel{\normalfont \textrm{def}}{=}\mathbb{W}_{p}^{p}(\mathbb{Q}',\mathbb{Q})$ at a point $\mathbb{Q}'$ is given by $\mathcal{G}[\mathbb{Q}'](y)=(f^{*})^{c}(y)$, where $f^{*}$ is the optimal dual potential (provided it is unique up to a constant) in \eqref{ot-dual-form-c} for a pair $(\mathbb{Q}',\mathbb{Q})$, see \cite[\wasyparagraph 7.2]{santambrogio2015optimal}. Our particular interest is to compute the optimal potential $(f^{*})^{c}$ at $\mathbb{Q}'=\mathbb{Q}$. We recall \eqref{ot-dual-form-c} and use $\mathbb{W}_{p}^{p}(\mathbb{Q},\mathbb{Q})=0$ to derive
$$\mathbb{W}_{p}^{p}(\mathbb{Q},\mathbb{Q})=0=
\sup_{f}\bigg[\int_{\mathcal{X}} f^{c}(y')d\mathbb{Q}'(y')+\int_{\mathcal{Y}} f(y)d\mathbb{Q}(y)\bigg].$$
One may see that $f^{*}\equiv 0$ attains the supremum (its $c$-transform $(f^{*})^{c}$ is also zero). Thus, \textbf{if $(f^{*})^{c}\equiv 0$ is a~unique potential} (up to a constant), the first variation of $\mathbb{Q}'\mapsto \mathbb{W}_{p}^{p}(\mathbb{Q}',\mathbb{Q})$ at $\mathbb{Q}'\!=\!\mathbb{Q}$ vanishes.

\vspace{-2mm}
\section{Experimental Details}
\subsection{Training Details}
\label{sec-details}
\vspace{-2mm}

 The practical optimization procedure of \textbf{Optimal Transport Solver} (OTS) is detailed in Algorithm \ref{algorithm-ot}.

 \begin{algorithm}[t!]
 \SetInd{0.5em}{0.3em}
     {
         \SetAlgorithmName{Algorithm}{empty}{Empty}
         \SetKwInOut{Input}{Input}
         \SetKwInOut{Output}{Output}
         \Input{distributions
         $\mathbb{P},\mathbb{Q}$ accessible by samples; mapping network $T_{\theta}:\mathcal{X}\rightarrow\mathcal{Y}$;\\ potential $f_{\omega}:\mathcal{X}\rightarrow\mathbb{R}$;
         transport cost $c:\mathcal{X}\times\mathcal{Y}\rightarrow\mathbb{R}$; number $K_{T}$ of inner iters;\\
         }
         \Output{approximate OT map  $(T_{\theta})_{\#}\mathbb{P}\approx \mathbb{Q}$\;}
         }
        
         \Repeat{not converged}{
             Sample batches $X\sim\mathbb{P}$, $Y\!\sim\! \mathbb{Q}$\;
             $\mathcal{L}_{f}\leftarrow  \frac{1}{|Y|}\sum\limits_{y\in Y}f_{\omega}(y) - \frac{1}{|X|}\sum\limits_{x\in X}f_{\omega}\big(T_{\theta}(x)\big)$\;
             Update $\omega$ by using $\frac{\partial \mathcal{L}_{f}}{\partial \omega}$ to maximize $\mathcal{L}_{f}$\;
            
             \For{$k_{T} = 1,2, \dots, K_{T}$}{
                 Sample batch $X\sim \mathbb{P}$\;
                 ${\mathcal{L}_{T}\leftarrow\frac{1}{|X|}\sum\limits_{x\in X}\big[c
                 \big(x, T_{\theta}(x)\big)-f_{\omega}\big(T_{\theta}(x)\big)\big]}$\;
             Update $\theta$ by using $\frac{\partial \mathcal{L}_{T}}{\partial \theta}$ to minimize $\mathcal{L}_{T}$\;
             }
         }
        
         \caption{ OT solver to compute the OT map between $\mathbb{P}$ and $\mathbb{Q}$ for transport cost $c(x,y)$.}
         \label{algorithm-ot}
 \end{algorithm}

\noindent\textbf{Perceptual cost.}
In \ref{sec-experiments-aim} we test following \textit{perceptual cost} as $b$:
\begin{eqnarray}b(x^{\text{up}}\hspace{-1mm},y)\hspace{-0.7mm}=\hspace{-0.7mm}\text{MSE}(x^{\text{up}}\hspace{-1mm},y)\hspace{-0.7mm}+\hspace{-0.7mm}\nicefrac{1}{3}\cdot\text{MAE}(x^{\text{up}}\hspace{-1mm},y)\hspace{-0.7mm}+\hspace{-0.7mm}
\nicefrac{1}{50} \hspace{-0.7mm}\cdot\hspace{-0.7mm} \hspace{-6mm}\sum_{k\in\{3,8,15,22\}}\hspace{-6mm}\text{MSE}\big(f_{k}(x^{\text{up}}),f_{k}(y)\big),
\nonumber
\label{perceptual-cost}
\end{eqnarray}

\vspace{-3.5mm}
\looseness=-1
where $f_{k}$ denotes the features of the $k$th layer of a pre-trained VGG-16 network \cite{simonyan2014very}, MAE is the mean absolute error $\text{MAE}(x,y)=\frac{\|x-y\|_{1}}{\dim (\mathcal{Y})}$.

\noindent\textbf{Dynamic transport cost}. In the preliminary experiments, we used bicubic upsampling as the ``$\text{Up}$" operation. Later, we found that the method works better if we gradually change the upsampling. We start from the bicubic upsampling. Every $k_{c}$ iterations of $f_{\omega}$ (see Table \ref{table-params}), we change the cost to $c(x,y)=b\big(T_{\theta}'(x), y\big)$, where $T_{\theta}'$ is a fixed frozen copy of the currently learned SR map $T_{\theta}$.

\noindent\textbf{Hyperparameters.} For EDSR, we set the number of residual blocks to 64, the number of features to 128, and the residual scaling to 1. For UNet, we set the base factor to 64. The training details are given in Table \ref{table-params}. We provide a comparison of the hyperparameters of FSSR, DASR and OTS in Table \ref{table-comparison-hyperparams}. In contrast to FSSR and DASR, \underline{OTS method does not contain a degradation part}. This helps to notably reduce the amount of tunable hyperparameters.

\noindent\textbf{Optimizer.} We employ Adam \cite{kingma2014adam}. 

\noindent\textbf{Computational complexity}. Training OTS with EDSR as the transport map and the perceptual transport cost on AIM 2019 dataset takes $\approx 4$ days on a single Tesla V100 GPU.

\begin{table*}[!h]
\centering
\scriptsize
\addtolength{\tabcolsep}{-1.3mm}
\hspace*{-6mm}\begin{tabular}{c|c|c|c|c|c|c|c|c|c|c|c}\hline
\textbf{Experiment} & $\text{dim}(\mathcal{X})$ & $\text{dim}(\mathcal{Y})$ & $f$ & $T$ & $k_{T}$ & $lr_{f}$ & $lr_{T}$ & \makecell{\textbf{Initial}\\\textbf{cost}} & \makecell{\textbf{Total }\\ \textbf{iters} ($f$)} & \makecell{\textbf{Cost}\\\textbf{update}\\\textbf{every}} & \makecell{\textbf{Batch}\\\textbf{size}} \\ \hline
\makecell{Benchmark\\ (\wasyparagraph\ref{sec-bias-experiments})} & $3\times 64\times 64$ & $3\times 64\times 64$ & \multirow{6}{*}{ResNet} & UNet & 10 & \multirow{4}{*}{$10^{-4}$} & \multirow{4}{*}{$10^{-4}$} & MSE & 10K & $-$ & 64 \\ 
 \cline{1-3}\cline{5-6}\cline{9-12}
\multirow{2}{*}{\makecell{AIM-19\\(\wasyparagraph\ref{sec-experiments-aim})}} & \multirow{2}{*}{\makecell{$3\times 32\times 32$\\(patches)}} & \multirow{2}{*}{\makecell{$3\times 128\times 128$\\(patches)}} &  & EDSR & 15 &   &   & \makecell{Bicubic +\\ MSE} & 50K & 25K &  8\\ \cline{5-6}\cline{9-12}
&  &  &  & EDSR & 10 &   &   & \makecell{Bicubic +\\ VGG} & 50K & 20K & 8 \\
\hline
\end{tabular}
\caption{\centering Hyperparameters that we use in the experiments with OTS Algorithm \ref{algorithm-ot}. }
\label{table-params}
\end{table*}
\vspace{-2mm}
\begin{table*}[!h]
\centering
\footnotesize\addtolength{\tabcolsep}{-1mm}
\hspace*{-6mm}\begin{tabular}{c|c|c|c}\toprule
 \textbf{Method} & \makecell{\textbf{Degradation part}} & \makecell{\textbf{Super-resolution part}} & \textbf{Total} \\ \midrule
\textbf{FSSR} & \makecell{2 neural networks; \\ 2 optimizers; \\ 2 schedulers; \\ 1 adversarial loss; \\ 1 content loss ($\ell_1$+perceptual)} & \makecell{2 neural networks; \\ 2 optimizers; \\ 2 schedulers; \\ 1 adversarial loss; \\ 1 content loss ($\ell_1$+perceptual)} & \makecell{4 neural networks; \\ 4 optimizers; \\ 4 schedulers; \\ 2 adversarial losses; \\ 2 content losses ($\ell_1$+perceptual)} \\ \midrule
\textbf{DASR} & \makecell{2 neural networks; \\ 2 optimizers; \\ 2 schedulers; \\ 1 adversarial loss; \\ 1 content loss ($\ell_1$+perceptual)} & \makecell{2 neural networks; \\ 2 optimizers; \\ 2 schedulers; \\ 1 adversarial loss; \\ 1 content loss ($\ell_1$+perceptual)} & \makecell{4 neural networks; \\ 4 optimizers; \\ 4 schedulers; \\ 2 adversarial losses; \\ 2 content losses ($\ell_1$+perceptual)} \\ \midrule
\makecell{\textbf{OTS}\\} & $-$ & \makecell{2 neural networks; \\ 2 optimizers; \\ 1 cost ($\ell_2$+$\ell_1$+perceptual)} & \makecell{2 neural networks; \\ 2 optimizers; \\ 1 cost ($\ell_2$+$\ell_1$+perceptual)} \\ \bottomrule
\end{tabular}
\caption{\centering Comparison of hyperparameters used in FSSR, DASR and OTS methods.}
\label{table-comparison-hyperparams}
\end{table*}

\vspace{-10mm}
\subsection{Evaluation metrics}
\label{sec-details-eval}
We use several evaluation metrics: Peak Signal-to-Noise Ratio (PSNR), Structural Similarity Index Measure (SSIM), Learned Perceptual Image Patch Similarity (LPIPS), and Fréchet Inception Distance (FID). These metrics assess different aspects of image quality.

\noindent \textbf{PSNR} measures pixel-wise similarity between the images and is given by the formula: $\text{PSNR}(x,y)=10\cdot \log_{10}\frac{m}{\text{MSE}(x,y)}$ where $x$ and $y$ correspond to the given images, $m$ - maximum possible pixel value of the image. Thus, it is applicable for assessing the average pixel-wise similarity of the images and, thus, usually favors blurry images \cite{wang2009mean}.

\noindent\textbf{SSIM} metric is designed to measure the structural differences between the images and is given by the formula    
\vspace{-4mm}$$
\text{SSIM}(x, y) = \frac{(2 m_x m_y + C_1)(2\sigma_{xy} + C_2)}{(m_x^2 + m_y^2 + C_1)(\sigma_x^2 + \sigma_y^2 + C_2)}.
$$\vspace{-3mm}

\noindent Here $x,y$ correspond to the given images, $m_x,m_y,\sigma^2_x,\sigma^2_y,\sigma_{xy}$ denote the means, variances and covariances of pixel values, respectively; constants $C_1,C_2$ are given by $C_1=(0.01\cdot L)^2,C_2=(0.03\cdot L)^2$ where $L$ denotes the range of pixel values, e.g., 255. 
While SSIM can capture structural changes of images, e.g., blurring, noise addition, it is still not sensitive to the other types of changes, e.g., in brightness or contrast \cite{kotevski2009experimental}. Overall, PSNR and SSIM are not well-aligned with visual quality of images, see \cite{reibman2006quality}.

The next two metrics are based on the usage of pre-trained neural networks and are known to better capture the perceptual quality of the images.

\noindent\textbf{LPIPS} \cite{zhang2018unreasonable} metric compares the images based on their feature embeddings instead of the pixel values. The embeddings are retrieved from the pre-trained neural networks, e.g., AlexNet \cite{krizhevsky2012imagenet}. Then the metric is calculated as a weighted sum of $\ell^2$-distances between the embeddings from all layers of the network. In contrast to PSNR and SSIM which deal with pixel-wise similarity of images, LPIPS captures high-level texture and content similarity and, thus, is better aligned with the human perception of image quality. However, all these metrics compare generated and true images which are assumed to be given in \textit{pairs}.

\noindent\textbf{FID} \cite{heusel2017gans} metric is calculated using the \textit{unpaired datasets} of generated and true images. It uses deep neural networks, i.e., Inception-v3 \cite{szegedy2016rethinking}, to retrieve the feature embeddings for the given datasets of true and generated images. Then the mean vectors $\mu_1,\mu_2$ and covariance matrices $\Sigma_1,\Sigma_2$ of these feature vectors are used to compute  the final score as
$$
\text{FID} = \left\| \mu_1 - \mu_2 \right\|^2 + \text{Tr}\left( \Sigma_1 + \Sigma_2 - 2\left(\Sigma_1 \Sigma_2\right)^{1/2} \right).
$$
This metric is sensitive to perceptual quality and realism of images \cite{borji2019pros,heusel2017gans}. Still, both LPIPS and FID depend on the underlying pre-trained neural networks and might be less informative for the images which significantly differ from the datasets used to train that networks.

In our paper, we report all of the stated metrics but mostly focus our attention on the perceptual metrics which better correlate with human perception of image quality. We calculate the metrics using \texttt{scikit-image} for SSIM and open source implementations for PSNR\footnote{\url{github.com/photosynthesis-team/piq}}, LPIPS\footnote{\url{github.com/richzhang/PerceptualSimilarity}} and FID\footnote{\url{github.com/mseitzer/pytorch-fid}}. 

\vspace{-2mm}
\section{Assessing the bias of methods on AIM19 dataset} 
\label{sec-aim-bias}
We additionally demonstrate the bias issue by comparing color palettes of HR images and super-resolution results of different methods, see Figure \ref{fig:aim-bias}. We construct palettes by choosing random image pixels  from dataset images and representing them as an RGB point cloud in $[0,1]^{3}\subset\mathbb{R}^{3}$. Figure~\ref{fig:aim-bias} shows that OTS {\color{LimeGreen}\textbf{(d)}} captures \textit{large contrast} of HR {\color{LimeGreen}\textbf{(a)}} images (variance of its palette), while FSSR {\color{red}\textbf{(e)}}, DASR {\color{red}\textbf{(f)}}, Bicubic Upscale {\color{red}\textbf{(c)}} palettes are \textit{less contrastive} and closer to LR {\color{orange}\textbf{(b)}}. We construct palettes 100 times to evaluate their average contrast (variance). The metric \textit{quantitatively} confirms that OTS method better captures the contrast of HR dataset, while GAN-based methods (FSSR and DASR) are notably {\textit{biased}} towards LR dataset statistics (low contrast).
\begin{figure}[h!]
{
\begin{center}
\includegraphics[width=0.9\linewidth]{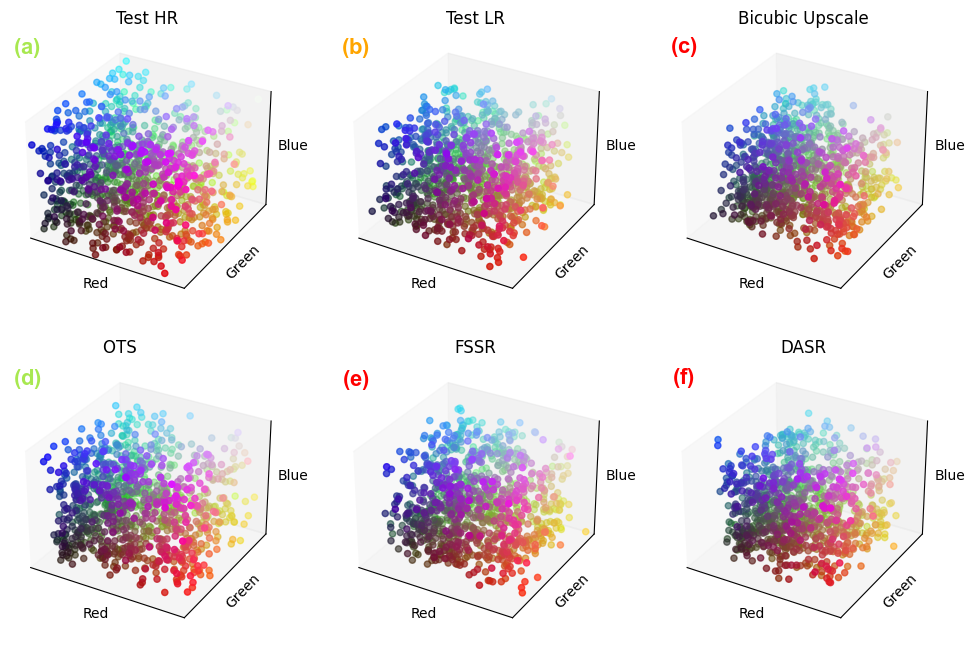}
\end{center}
}
\centering
\small
{
        \small
        \begin{tabular}{c|c|c|c|c|c|c}
        \toprule
        \textit{Dataset} &  \textit{Test HR} & \textit{Test LR} &  \textit{Bicubic} & \textit{OTS (VGG)} &  \textit{FSSR}    &  \textit{DASR} \\
        \midrule
        \textbf{Variance} & \makecell{{\color{LimeGreen}\textbf{0.24}} \\ $\pm0.01$}  & \makecell{{\color{orange}0.17} \\ $\pm0.01$} & \makecell{{\color{red}0.15} \\ $\pm0.02$} &  \makecell{{\color{LimeGreen}\textbf{0.20}} \\ $\pm0.03$} &  \makecell{{\color{red}0.17} \\ $\pm0.02$}  & \makecell{{\color{red}0.15} \\ $\pm0.02$} \\
        \bottomrule
        \end{tabular}
        }
\caption{\centering Color palettes and their variance for Test HR, LR datasets and solutions of  Bicubic Upscale, OTS, FSSR, DASR methods on AIM19.}
\label{fig:aim-bias}
\end{figure}
 \section{Connection between GAN objectives and Equation \eqref{base-gan-c}}
\label{gan-objectives-example}

Typical objectives of GAN-based approaches consist of multiple losses $-$ usually one adversarial and several content losses. To make the exposition simple, in our paper, we represented all the content losses as a single loss $c(\cdot, \cdot)$. Below we provide several examples showing how the objectives of popular GAN-based approaches to unpaired image SR could be viewed as \eqref{base-gan-c}. For all of these methods, our Lemma \ref{lemma-optimal} applies without any changes. We include in brackets the number of papers citations according to Google Scholar to show that chosen methods are widely used.

\noindent\textbf{FaceSR} (2018, 493 citations) The paper of \cite{bulatyang2018learn} presents one of the first GAN-based approaches to unpaired image SR problem. The method is composed of two steps. First, it learns a degradation between unpaired HR and LR images. Then it employs a second GAN to learn a supervised mapping between paired generated LR and corresponding HR images. The objective of the unpaired step (see their Eq. (1)) is as follows:
\begin{equation*}
    l = \underbrace{\alpha l_\text{pixel}}_\text{content loss} + \underbrace{\beta l_{\text{GAN}}.}_\text{adversarial loss}
\end{equation*}
Here $l_\text{pixel}$ is the MSE loss between the generated LR image and downsampled HR. Thus, the objective of this method exactly follows Equation \eqref{base-gan-c}.

\noindent\textbf{CinCGAN} (2018, 904 citations) The method of \cite{yuan2018unsupervised} is an other pioneering GAN-based approach to unpaired image SR problem, which establishes a different to FaceSR group of two-step methods. First, it uses one CycleGAN to learn a mapping between given noisy LR images and downsampled HR ("clean LR") images. Then, a second CycleGAN fine-tunes a mapping between real LR and HR images. The objective for the first GAN (see their Eq. (5)) is as follows:
\begin{eqnarray*}
    \mathcal{L}^{LR}_\text{total} = \underbrace{\mathcal{L}^\text{LR}_\text{GAN}}_\text{adversarial loss} +\underbrace{ w_1 \mathcal{L}^\text{LR}_\text{cyc} + w_2 \mathcal{L}^\text{LR}_\text{idt} + w_3 \mathcal{L}^\text{LR}_\text{TV}}_\text{content loss}.
\end{eqnarray*}
Here $\mathcal{L}^{LR}_\text{cyc}$ is the cycle-consistency loss\footnote{$\mathcal{L}^\text{LR}_\text{cyc}$ is defined as the MSE loss between given LR image $x$ and $G_2(G_1(x))$, where $G_1$ learns to map real LR images to "clean" ones and $G_2$ learns an opposite mapping. For a fixed $G_2$ this loss can be considered as a part of the content loss.}, $\mathcal{L}^\text{LR}_\text{idt}$ $-$ $l_1$ identity loss, $\mathcal{L}^\text{LR}_\text{TV}$ $-$ total variation loss.

\noindent\textbf{FSSR} (Winner of the AIM Challenge on Real-World SR \cite{lugmayr2019aim}, 2019, 260 citations)
FSSR \cite{fritsche2019frequency} method employs a similar to FaceSR strategy. It firstly learns a mapping between downsampled HR images and given unpaired LR images, and then uses the generated pairs to learn a supervised SR model. 
The objective of the unpaired step (see their Eq. (6)) is defined by:
\begin{equation*}
    \mathcal{L}_d = \underbrace{0.005 \mathcal{L}_{\text{tex, d}}}_\text{adversarial loss} + \underbrace{\mathcal{L}_{\text{col, d}} + 0.01 \mathcal{L}_{\text{per, d}}}_\text{content loss},
\end{equation*}
where the texture (adversarial) loss $\mathcal{L}_{\text{tex, d}}$ and the color ($l_1$ identity) loss $\mathcal{L}_{\text{col, d}}$ are applied to low frequencies of the images, while the perceptual loss $\mathcal{L}_{\text{per, d}}$ $-$ to the features of the full images.

\noindent\textbf{DASR} (2021, 165 citations) DASR \cite{Wei_2021_CVPR} structure is also based on the similar to FSSR principles and its two-step structure.
In contrast to FSSR, a SR network is trained in a partially supervised manner using not only generated, but also real LR images.
The objective of the fully unpaired degradation learning step (see their Eq. (4)) is as follows:
\begin{eqnarray*}
    \mathcal{L}_\text{DSN} = \underbrace{\alpha \mathcal{L}_\text{con} + \beta \mathcal{L}_\text{per}}_\text{content loss} + \underbrace{\gamma \mathcal{L}_\text{adv}^G.}_\text{adversarial loss}
\end{eqnarray*}
Here the adversarial loss $\mathcal{L}_\text{adv}^G$ is defined on high frequencies of the image, while the content $\mathcal{L}_\text{con}$ ($l_1$ identity) and the perceptual $\mathcal{L}_\text{per}$ losses are defined on full images and their features respectively.

\noindent\textbf{ESRGAN-FS} (2020, 56 citations) ESRGAN-FS is an other two-step approach based on the principle of learning the degradation, see \cite{zhou2020guided}. 
The objective of its unpaired degradation learning step (see their Eq. (4)) is as follows:
\begin{eqnarray*}
    \mathcal{L}_\text{total} = \underbrace{\lambda_{t1} \cdot \mathcal{L}_\text{low} + \lambda_{t2} \cdot \mathcal{L}_\text{per}}_\text{content loss} + \underbrace{\lambda_{t3} \cdot \mathcal{L}_\text{high}}_\text{adversarial loss}.
\end{eqnarray*}
Here $\mathcal{L}_\text{low}$ ($l_1$ identity) loss is applied to low frequencies of the images, the perceptual loss $\mathcal{L}_\text{per}$ $-$ to the features of the full images, while $\mathcal{L}_\text{high}$ (adversarial loss) $-$ high frequencies of the images.

\newpage
\section{Additional Qualitative Results on AIM19}
\label{sec-additional-results}
\begin{figure*}[!h]
\centering
\begin{subfigure}{\textwidth}
\centering\includegraphics[width=0.9\linewidth]{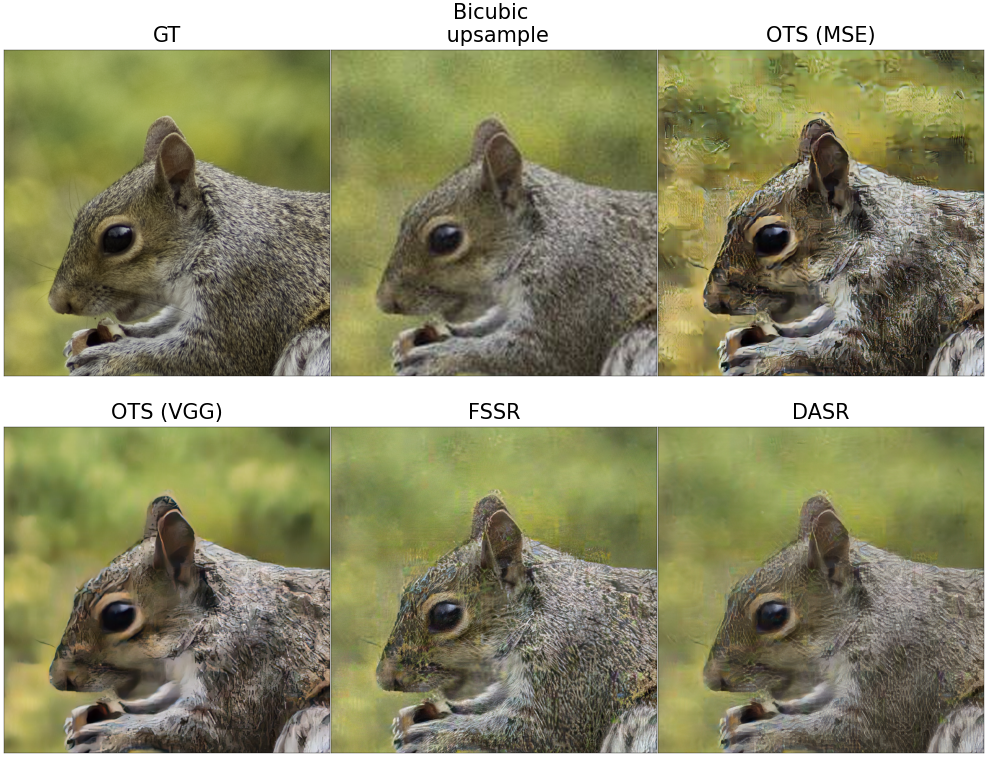}
\end{subfigure}
\begin{subfigure}{\linewidth}
\vspace{3mm}
\centering\includegraphics[width=0.9\textwidth]{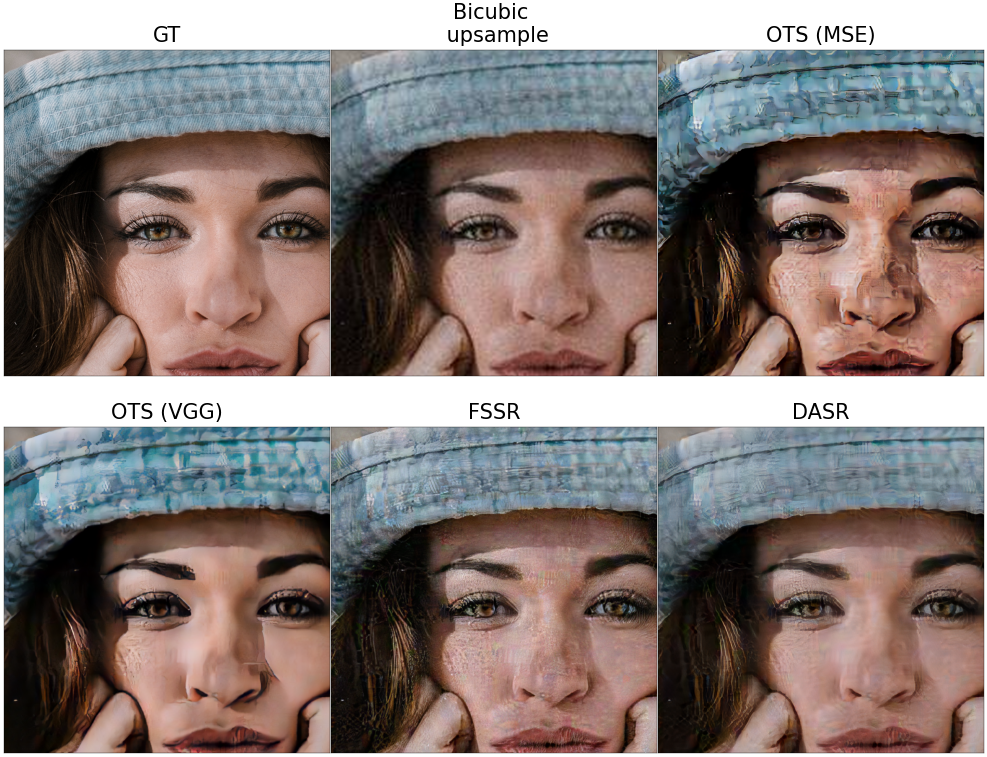}
\end{subfigure}
\caption*{\centering Figure 9: Additional qualitative results of OTS, bicubic upsample, FSSR and \protect{\linebreak} DASR on AIM 2019 (800$\times$800 crops).
}
\vspace{-8mm}
\label{fig:aim_ex2}
\end{figure*}

\begin{figure*}[!t]
\centering
\begin{subfigure}{\textwidth}
\includegraphics[width=0.9\linewidth]{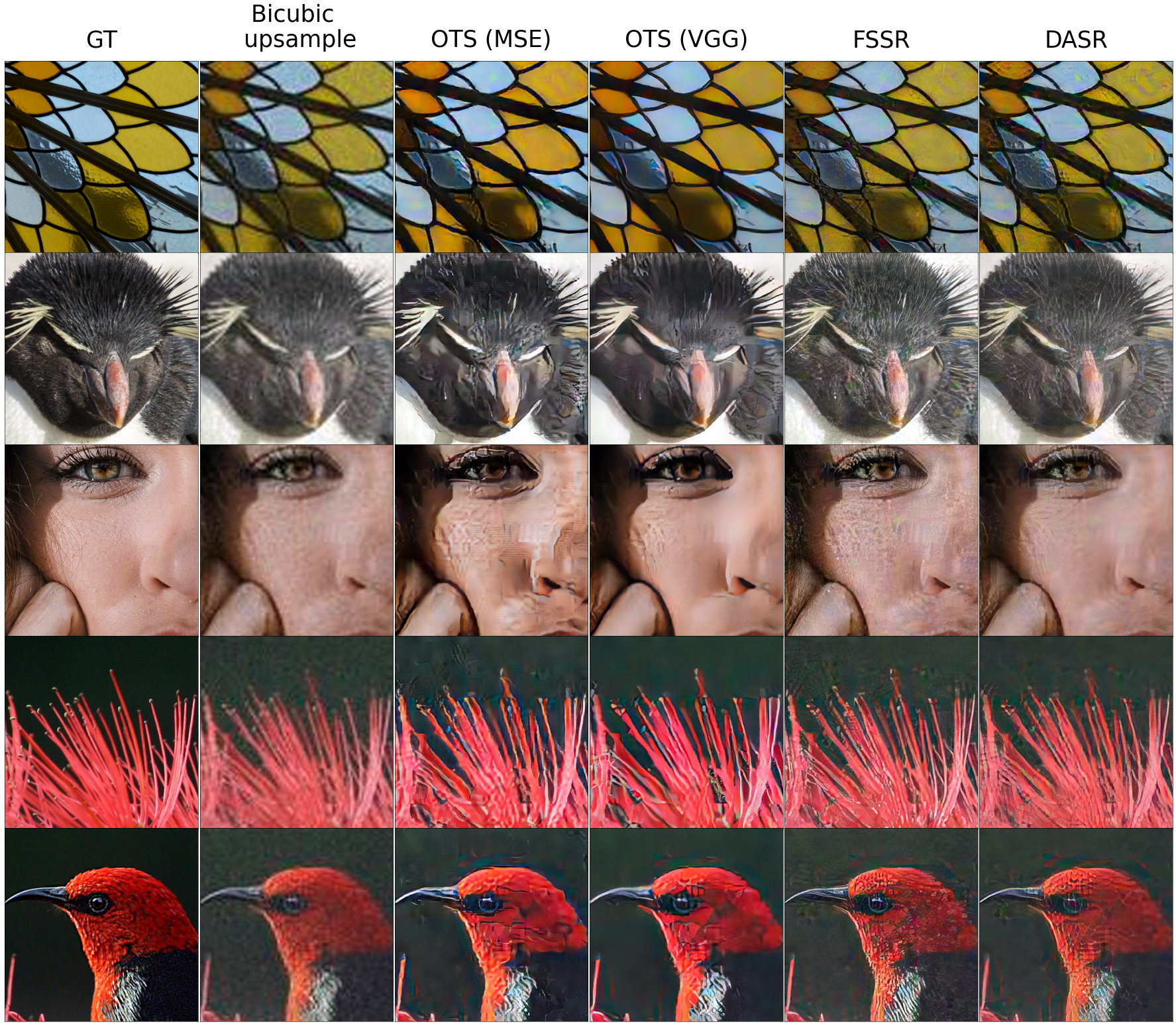}
\end{subfigure}
\begin{subfigure}{\textwidth}
\vspace{3mm}
\includegraphics[width=0.9\linewidth]{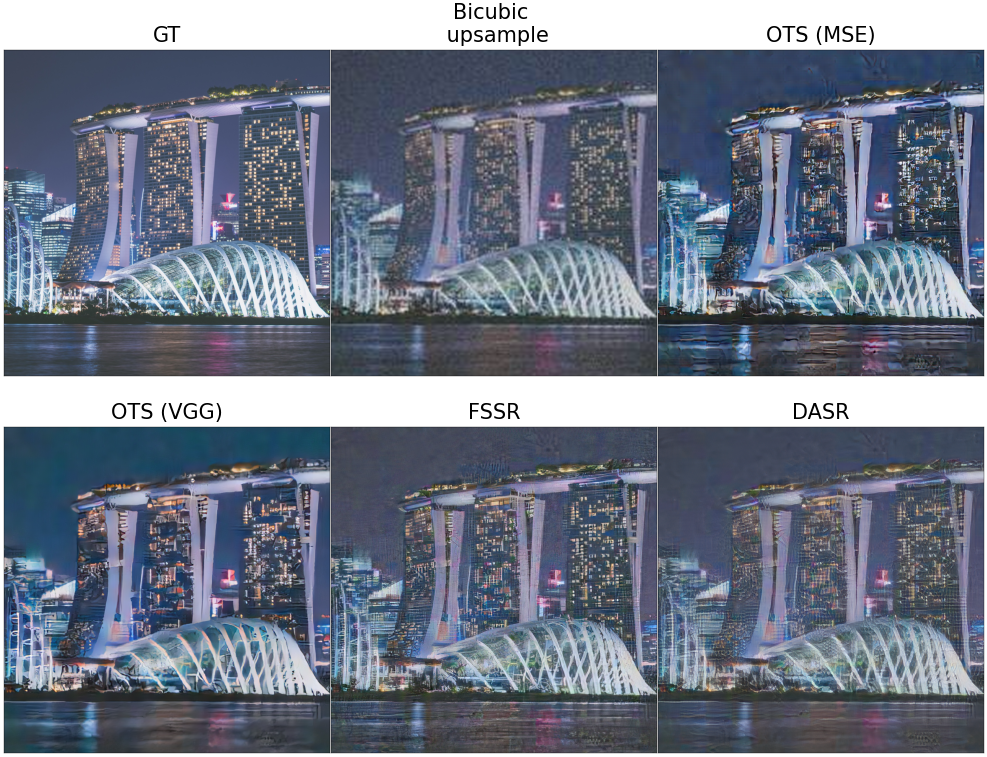}
\end{subfigure}
\caption*{\centering Figure 10: Additional qualitative results of OTS, bicubic upsample, FSSR and DASR on AIM 2019. The sizes of crops on the 1st and 2nd images are 350$\times$350 and 800$\times$800, respectively.
}
\vspace{-5mm}
\label{fig:aim_res2}
\end{figure*}

\end{document}